\documentclass[journal]{IEEEtran}

\pdfoutput=1
\pdfminorversion=4

\usepackage[utf8]{inputenc}
\usepackage[pdftex]{graphicx}
\graphicspath{{./Figures/}{../Figures/}} 
\usepackage{epstopdf} 
\usepackage{float} 
\usepackage{subcaption} 
\usepackage{booktabs}
\usepackage{amsmath, amssymb} 
\usepackage{amsthm} 
\usepackage{multirow}
\usepackage{cases}
\usepackage{tocloft} 
\usepackage{enumitem} 
\interdisplaylinepenalty=2500

\usepackage{threeparttable}
\usepackage{tablefootnote}
\usepackage{multirow}

\usepackage{amsthm} 
    \newtheorem{lemma}{Lemma}
    
    \newtheorem{definition}{Definition}
    
    \newtheorem{remark}{Remark}
    \newtheorem{theorem}{Theorem}
    \newtheorem{corollary}{Corollary}
    
    \newtheorem{proposition}{Proposition}
\renewcommand{\qedsymbol}{$\blacksquare$}

\usepackage{xcolor}
\usepackage{hyperref}
\hypersetup{%
    colorlinks=true,
    linkcolor={blue!50!black},
    citecolor={blue!50!black},
    urlcolor={blue!50!black}
}

\usepackage{dblfloatfix}

\setlength{\tabcolsep}{4pt}

\usepackage[ruled, lined, longend, linesnumbered]{algorithm2e}
\usepackage[hang,flushmargin]{footmisc}
\usepackage{algpseudocode}
\SetKwInput{Input}{Input}
\SetKwInput{Declares}{Declares}
\SetKwInput{Require}{Require}
\SetKwInput{Prototype}{Prototype}
\SetKw{KwBy}{by}
\SetKw{Kwor}{or}
\SetKw{Kwand}{and}
\SetKw{KwEndFor}{end for}

\usepackage{fancyhdr}

\fancyhead[CO, CE]{\begin{minipage}[b]{\textwidth} \begin{center}\scriptsize This article has been accepted for publication in IEEE Transactions on Automatic Control . This is the author's version which has not been fully edited and \\content may change prior to final publication. Citation information: DOI \url{https://doi.org/10.1109/TAC.2025.3549348} \end{center}\end{minipage}}
\fancyhead[RO, RE]{\thepage}
\fancyfoot[CO, CE]{\scriptsize \textcopyright 2025 IEEE. All rights reserved, including rights for text and data mining and training of artificial intelligence and similar technologies. Personal use is permitted,\\ but republication/redistribution requires IEEE permission. See \url{https://www.ieee.org/publications/rights/index.html} for more information.}


\newcommand{\cc}[1]{{\mathcal{#1}}} 
\def\set#1#2{\left\{  #1 : #2 \right\}}
\def\R{\mathbb{R}} 
\def\Z{\mathbb{Z}} 
\def\Dp#1{\mathbb{D}_{+}^{#1}} 
\def\Sp#1{\mathbb{S}_{+}^{#1}} 

\newcommand{\T}{^\top} 
\def\sp#1#2{\langle #1,#2\rangle } 
\def\Sum#1#2{\sum\limits_{#1}^{#2}} 

\def\vv#1{{ \rm \bf{#1}}} 
\newcommand{\becauseof}[2][=]{\stackrel{\scriptstyle\mkern-1.5mu#2\mkern-1.5mu}{#1}}
\def\moveEq#1{{}\mkern#1mu} 
\def\diag{\mathtt{diag}}

\newcommand{\yLB}{\underline{y}}
\newcommand{\yLBj}{\underline{y}_{(i)}}
\newcommand{\yUB}{\overline{y}}
\newcommand{\yUBj}{\overline{y}_{(i)}}
\def\xe{x_e} 
\def\xs{x_s}
\def\xc{x_c}
\def\ue{u_e} 
\def\us{u_s}
\def\uc{u_c}


\def\xr{x_r}
\def\ur{u_r}
\def\xre{x_{re}} 
\def\xrs{x_{rs}}
\def\xrc{x_{rc}}
\def\ure{u_{re}} 
\def\urs{u_{rs}}
\def\urc{u_{rc}}

\def\xh{x_h} 
\def\xhj{x_{h}^{k}} 

\def\xH{\vv{x}_h} 
\def\uh{u_h} 
\def\uhj{u_{h}^{k}} 

\def\uH{\vv{u}_h} 
\def\xHo{\mathring{\vv{x}}}
\def\uHo{\mathring{\vv{u}}}
\def\xeo{\mathring{x}_e}
\def\ueo{\mathring{u}_e}
\def\xso{\mathring{x}_s}
\def\uso{\mathring{u}_s}
\def\xco{\mathring{x}_c}
\def\uco{\mathring{u}_c}
\def\xrH{\vv{x}_{r}} 
\def\urH{\vv{u}_{r}} 
\def\xho{\mathring{x}}
\def\uho{\mathring{u}}

\newcommand{\Tw}[2][w]{\mathcal{T}_{#1}^{#2}}
\newcommand{\Twa}[1]{\mathcal{T}_h^{#1}}
\def\cD{\cc{D}}
\def\cC{\cc{C}_\sigma}


\begin{document}

\title{\LARGE Harmonic model predictive control for tracking sinusoidal references and its application to trajectory tracking}

\author{Pablo~Krupa$^*$,~Daniel~Limon$^\dagger$,~Alberto~Bemporad$^*$,~Teodoro~Alamo$^\dagger$%
    \thanks{$^*$ IMT School for Advanced Studies, Piazza San Francesco 19, Lucca, Italy. Emails: \texttt{\{pablo.krupa, alberto.bemporad\}@imtlucca.it}}%
\thanks{$^\dagger$ Department of Systems Engineering and Automation, Universidad de Sevilla, Seville, Spain. E-mails: \texttt{dlm@us.es}, \texttt{talamo@us.es}.}%
\thanks{Corresponding author: Pablo Krupa.}%
\thanks{This work has been funded by grant PID2022-141159OB-I00 funded by MCIN/AEI/10.13039/501100011033 and by ERDF/EU.}
}

\pagestyle{fancy}
\maketitle
\thispagestyle{fancy}

\begin{abstract}
Harmonic model predictive control (HMPC) is a recent model predictive control (MPC) formulation for tracking piece-wise constant references that includes a parameterized artificial harmonic reference as a decision variable, resulting in an increased performance and domain of attraction with respect to other MPC formulations.
This article presents an extension of the HMPC formulation to track periodic harmonic/sinusoidal references and discusses its use for tracking arbitrary trajectories.
The proposed formulation inherits the benefits of its predecessor, namely its good performance and large domain of attraction when using small prediction horizons, and that the complexity of its optimization problem does not depend on the period of the reference.
We show closed-loop results discussing its performance and comparing it to other MPC formulations.
\end{abstract}

\begin{IEEEkeywords}

Predictive control, constrained control, harmonic/sinusoidal signal, periodic reference, trajectory tracking.

\end{IEEEkeywords}

\section{Introduction} \label{sec:intro}

The use of model predictive control (MPC) \cite{rawlings_model_2017} to track periodic references is a widely studied problem in the control literature, since it has many practical applications, such as repetitive control \cite{gupta_period-robust_2006}, control of periodic systems \cite{leomanni_sum--norms_2020, gondhalekar_mpc_2011}, or economic MPC \cite{risbeck_economic_2020}. 
In \cite{limon_mpc_2016}, a linear MPC for tracking periodic references was presented as an extension of the MPC for tracking piece-wise affine references from \cite{limon_mpc_2008}.
This formulation makes use of an artificial periodic reference trajectory, which becomes part of its optimization problem.
The benefits of using this artificial reference are that the resulting MPC controller is recursively feasible even in the event of a reference change, and that the closed-loop system converges to the periodic trajectory that is ``closest'' to the periodic reference, where the distance is measured by the terminal cost of the MPC controller. Thus, the formulation inherently deals with references that cannot be perfectly tracked.
Additionally, the use of the artificial reference results in a domain of attraction that is typically significantly larger than the ones obtained from classical MPC formulations \cite{ferramosca_mpc_2009}.
The disadvantage of this approach is that the number of decision variables of the artificial periodic reference grows with its period, thus increasing the complexity of the optimization problem.
The use of artificial references in MPC for tracking periodic references has been extended to nonlinear MPC \cite{Kohler_NMPC_18, kohler_AUT_2020, yang_nonlinear_2021}, as well as applied to economic MPC~\cite{Limon_JPC_2014, Pereira_PEMPC_2015} and distributed (periodic) MPC \cite{Kohler_IFAC_2023}.

In \cite{krupa_harmonic_2022}, the authors present a linear MPC formulation that uses a parameterized harmonic signal as an artificial reference for tracking piece-wise affine set-points.
The formulation, named Harmonic Model Predictive Control (HMPC), retains the recursive feasibility and asymptotic stability properties of the original linear MPC for tracking~\cite{limon_mpc_2008}.
Additionally, in~\cite{krupa_harmonic_2022}, the authors show that the use of a harmonic artificial reference may lead to a significantly larger domain of attraction and better performance when working with small prediction horizons.
The downside of the HMPC formulation is that the use of a harmonic artificial reference comes at the cost of the inclusion of second-order cone constraints, resulting in an optimization problem which is no longer a quadratic programming (QP) problem.
However, in \cite{krupa_efficiently_2022} the authors presented an efficient solver for the HMPC formulation, showing that its solution-time is comparable to state-of-the-art QP solvers applied to alternative MPC formulations.

Harmonic signals (otherwise known as sinusoidal signals) are a particular class of periodic signal that are found in many practical applications, such as power electronics \cite{Karamanakos_OJIA_2020, Ordonez_TPE_2023}, robotics~\cite{Lafmejani_RAL_2020} or spacecraft rendezvous \cite{leomanni_sum--norms_2020,dong_novel_2022}.

In this paper, we present an extension of HMPC for tracking harmonic references, instead of the piece-wise constant references considered in \cite{krupa_harmonic_2022}.
This is a natural extension, given the fact that the HMPC formulation uses a harmonic artificial reference and the wide range of applications where harmonic signals naturally occur.
The benefit of the proposed linear MPC formulation, when compared to periodic MPC, is that the complexity of its optimization problem does not depend on the period of the harmonic reference.
The formulation also inherently deals with non-admissible references, resulting in a closed-loop behavior that may differ from the typical one obtained from other periodic MPC formulations due to the terminal ingredients partly measuring the ``distance'' to the reference in terms of it ``shape'', as we show in the numerical case study.
The optimization problem of the proposed extension can be solved using a minor modification of the solver from~\cite{krupa_efficiently_2022} (available in \cite{krupa_spcies_2020}).
Additionally, the formulation retains the recursive feasibility and asymptotic stability of the original HMPC, as well as its good performance and large domain of attraction when using small prediction horizons.

The proposed formulation may have other useful applications, such as obstacle avoidance~\cite{Pereira_Obstacles_2021, dosSantos_AUT_2024} or tracking of generic reference trajectories~\cite{kohler_TAC_2020}, due to its large domain of attraction, the elliptic nature of its artificial reference, and its aforementioned ``shape-tracking'' behavior.
In particular, we find that the dynamic nature of the harmonic artificial reference leads to a remarkably good performance when applied to the problem of tracking generic reference trajectories, as we discuss in Section~\ref{sec:arbitrary:ref} and show in the numerical case study.

\subsubsection*{Notation}

Given two vectors $x$ and $y$, $x \leq (\geq) \; y$ denotes componentwise inequalities.
Given two integers $i$ and $j$ with ${j \geq i}$, $\Z_i^j$ denotes the set of integer numbers from $i$ to $j$, i.e. ${\Z_i^j \doteq \{i, i+1, \dots, j-1, j\}}$. 
We denote by $\Sp{n}$ ($\Dp{n}$) the set of (diagonal) positive definite matrices in $\R^{n \times n}$.
For vectors $x_1$ to $x_N$, $(x_{1}, x_{2}, \dots, x_{N})$ denotes the column vector formed by their concatenation.
Given a vector $x\in \R^{n}$, we denote its $i$-th component using a parenthesized subindex $x_{(i)}$.
Given two vectors $x \in \R^{n}$ and $y \in \R^{n}$, their standard inner product is denoted by $\sp{x}{y} \doteq \sum_{i=1}^{n} x_{(i)} y_{(i)}$.
For $x \in \R^{n}$ and $A \in \Sp{n}$, $\|x\| \doteq \sqrt{\sp{x}{x}}$ and $\|x\|_A \doteq \sqrt{\sp{x}{A x}}$.
The identity matrix of dimension $n$ is denoted by $I_n$.
Given scalars and/or matrices $M_1, \dots, M_N$, we denote by $\texttt{diag}(M_1, \dots, M_N)$ the block diagonal matrix formed by the concatenation of $M_1$ to $M_N$.

\section{Admissible harmonic signals} \label{sec:prelim:harmonic}

\begin{definition}[Harmonic signal] \label{def:harmonic}
A trajectory $v(\cdot) \in \R^{m}$ is a \textit{harmonic signal} if it satisfies
\begin{equation} \label{eq:harmonic}
    v(t) = v_e + v_s \sin(w t) + v_c \cos(w t), \\
\end{equation}
for some parameters $v_e, v_s, v_c \in \R^{m}$ and frequency $w > 0$.
\end{definition}

In the following, we use a bold $\vv{v} \doteq (v_e, v_s, v_c)$ as a shorthand to indicate the parameters of a harmonic signal~$v(\cdot)$, where their dimension is inferred from the dimension of~$v(\cdot)$.

Consider a system described by a controllable linear time-invariant state-space model
\begin{equation} \label{eq:model}
    x(t+1) = A x(t) + B u(t),
\end{equation}
where $x(t) \in \R^{n_x}$ and $u(t) \in \R^{n_u}$ are the state and control input at the discrete time instant $t$, respectively, subject to
\begin{equation} \label{eq:constraints}
    \yLB \leq E x(t) + F u(t) \leq \yUB,\;\forall t,
\end{equation}
where $E \in \R^{n_y \times n_x}$, $F \in \R^{n_y \times n_u}$, $\yLB, \yUB \in \R^{n_y}$ and $\yLB < \yUB$.

\begin{definition}[Admissible harmonic signals] \label{def:harmonic:admissible}
The harmonic signals $\hat{x}(\cdot)$ and $\hat{u}(\cdot)$ with frequency $w > 0$, parametrized by $\vv{\hat{x}} \doteq (\hat{x}_e, \hat{x}_s, \hat{x}_c)$ and $\vv{\hat{u}} \doteq (\hat{u}_e, \hat{u}_s, \hat{u}_c)$, are \textit{admissible} if they satisfy ${\hat{x}(t+1)} = A \hat{x}(t) + B \hat{u}(t)$ and $\yLB \leq E \hat{x}(t) + F \hat{u}(t) \leq \yUB$, $\forall t$, where if the inequalities are strictly satisfied $\forall t$, we say that they are \textit{strictly admissible}.
\end{definition}

The following propositions provide sufficient conditions for admissibility of a harmonic signal, where we use the notation
$\hat{y}_e \doteq E \hat{x}_e + F \hat{u}_e$, $\hat{y}_s \doteq E \hat{x}_s + F \hat{u}_s$, and $\hat{y}_c \doteq E \hat{x}_c + F \hat{u}_c$,
for clarity of presentation.

\begin{proposition} \label{prop:harmonic:dynamics}
Let $\hat{x}(\cdot)$ and $\hat{u}(\cdot)$ be harmonic signals with the same frequency $w$ parametrized by $ \vv{\hat{x}} \doteq (\hat{x}_e, \hat{x}_s, \hat{x}_c)$ and $\vv{\hat{u}} \doteq (\hat{u}_e, \hat{u}_s, \hat{u}_c)$, and 
\begin{equation*}
    \cD \doteq \set{(\vv{\hat{x}}, \vv{\hat{u}})}{ \begin{array}{@{}l@{}} \hat{x}_e = A \hat{x}_e + B \hat{u}_e \\ \hat{x}_s \cos(w) - \hat{x}_c \sin(w) = A \hat{x}_s + B \hat{u}_s\\ \hat{x}_s \sin(w) + \hat{x}_c \cos(w) = A \hat{x}_c + B \hat{u}_c \end{array}}.
\end{equation*}
Then, $(\vv{\hat{x}}, \vv{\hat{u}}) \in \cD$ implies $\hat{x}(t+1) = A \hat{x}(t) + B \hat{u}(t)$, $\forall t$.
\end{proposition}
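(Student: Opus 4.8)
The plan is to substitute the harmonic parametrizations directly into the dynamic relation and reduce everything to the three algebraic identities defining $\cD$, using only the angle-addition formulas for sine and cosine. Note that the claimed implication is one-directional, so no linear-independence argument for the functions $\{1, \sin(wt), \cos(wt)\}$ is needed: it suffices to verify that the equality $\hat{x}(t+1) = A\hat{x}(t) + B\hat{u}(t)$ holds pointwise in $t$ whenever $(\vv{\hat{x}}, \vv{\hat{u}}) \in \cD$.

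Concretely, I would first expand the right-hand side using \eqref{eq:harmonic}:
\[
A\hat{x}(t) + B\hat{u}(t) = (A\hat{x}_e + B\hat{u}_e) + (A\hat{x}_s + B\hat{u}_s)\sin(wt) + (A\hat{x}_c + B\hat{u}_c)\cos(wt),
\]
and then substitute the three defining equations of $\cD$ into the three bracketed coefficients, obtaining
\[
A\hat{x}(t) + B\hat{u}(t) = \hat{x}_e + \bigl(\hat{x}_s\cos(w) - \hat{x}_c\sin(w)\bigr)\sin(wt) + \bigl(\hat{x}_s\sin(w) + \hat{x}_c\cos(w)\bigr)\cos(wt).
\]
Regrouping the terms by $\hat{x}_s$ and $\hat{x}_c$ and applying the identities $\sin(w(t+1)) = \sin(wt)\cos(w) + \cos(wt)\sin(w)$ and $\cos(w(t+1)) = \cos(wt)\cos(w) - \sin(wt)\sin(w)$ then yields
\[
A\hat{x}(t) + B\hat{u}(t) = \hat{x}_e + \hat{x}_s\sin(w(t+1)) + \hat{x}_c\cos(w(t+1)) = \hat{x}(t+1),
\]
where the last equality is again \eqref{eq:harmonic}, now evaluated at $t+1$. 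Since $t$ was arbitrary, this proves the claim.

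There is no genuine obstacle here; the statement is essentially a bookkeeping exercise with trigonometric identities. The only subtlety worth stating explicitly is that the verification proceeds termwise and does not invoke any linear independence of the basis functions $1$, $\sin(wt)$, $\cos(wt)$ — such independence would only be required for the converse direction (i.e., to characterize exactly which harmonic signals satisfy the dynamics), which is not asserted by this proposition.
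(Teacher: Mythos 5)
Your proof is correct: the termwise substitution of the three defining equations of $\cD$ followed by the angle-addition identities is exactly the computation that establishes the claim, and your remark that no linear-independence argument is needed for this one-directional implication is accurate. The paper itself does not spell this out but simply defers to \cite[Property~2]{krupa_harmonic_2022}, which rests on the same direct verification, so your argument matches the intended proof.
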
%
{\renewcommand{\qedsymbol}{}
\begin{proof}
The proposition follows from \cite[Property 2]{krupa_harmonic_2022}.
\end{proof}}

\begin{proposition} \label{prop:harmonic:constraints}
Let $\hat{x}(\cdot)$ and $\hat{u}(\cdot)$ be harmonic signals with the same frequency $w$ parametrized by $ \vv{\hat{x}} \doteq (\hat{x}_e, \hat{x}_s, \hat{x}_c)$ and $\vv{\hat{u}} \doteq (\hat{u}_e, \hat{u}_s, \hat{u}_c)$, and 
    \begin{equation*}
        \cC \doteq \set{(\hat{x}, \hat{u})}{ \begin{array}{@{}l@{}} (\hat{y}_{e(i)}, \hat{y}_{s(i)}, \hat{y}_{c(i)}) \in \overline{\cc{Y}}_i \cap \underline{\cc{Y}}_i, \, i \in \Z_1^{n_y} \end{array}},
    \end{equation*}
where sets $\overline{\cc{Y}}_i$ and $\underline{\cc{Y}}_i$ are defined as
\begin{equation*} \label{eq:Y:sets}
\begin{aligned}
    \overline{\cc{Y}}_i &\doteq \set{y {=} (y_0, y_1)}{ y_0 {\in} \R, y_1 {\in} \R^{2}, \| y_1\| \leq \yUBj - \sigma - y_0}, \\
    \underline{\cc{Y}}_i &\doteq \set{y {=} (y_0, y_1)}{y_0 {\in} \R, y_1 {\in} \R^{2}, \| y_1\| \leq y_0 - \yLBj - \sigma },
\end{aligned}
\end{equation*}
and $\sigma \geq 0$.
Then, $(\vv{\hat{x}}, \vv{\hat{u}}) \in \cC$ implies that $\hat{x}(\cdot)$ and $\hat{u}(\cdot)$ satisfy $\yLB \leq E \hat{x}(t) + F \hat{u}(t) \leq \yUB$, $\forall t$, where the implication follows with strict inequality if $\sigma > 0$.
\end{proposition}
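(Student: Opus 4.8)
The plan is to use linearity to turn the constraint into a per-component scalar condition on a sinusoid, and then bound that sinusoid by its amplitude. First I would note that, since $\hat{x}(t)$ and $\hat{u}(t)$ are harmonic signals of the same frequency $w$, the signal $\hat{y}(t) \doteq E\hat{x}(t) + F\hat{u}(t)$ is again harmonic of frequency $w$: substituting the defining form \eqref{eq:harmonic} of $\hat{x}(t)$ and $\hat{u}(t)$ and collecting the constant, $\sin(wt)$ and $\cos(wt)$ terms gives $\hat{y}(t) = \hat{y}_e + \hat{y}_s\sin(wt) + \hat{y}_c\cos(wt)$, where $\hat{y}_e,\hat{y}_s,\hat{y}_c$ are precisely the quantities defined before Proposition~\ref{prop:harmonic:dynamics}. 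Consequently, for each $i \in \Z_1^{n_y}$ the $i$-th component satisfies $\hat{y}_{(i)}(t) = \hat{y}_{e(i)} + \hat{y}_{s(i)}\sin(wt) + \hat{y}_{c(i)}\cos(wt)$, and the desired bound $\yLB \leq \hat{y}(t) \leq \yUB$ decouples into one two-sided scalar inequality per index $i$.

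Next I would bound the oscillatory part: by the Cauchy--Schwarz inequality (equivalently, the amplitude--phase form of a sinusoid), for every $t$ one has $|\hat{y}_{s(i)}\sin(wt) + \hat{y}_{c(i)}\cos(wt)| \leq \sqrt{\hat{y}_{s(i)}^2 + \hat{y}_{c(i)}^2} = \|(\hat{y}_{s(i)}, \hat{y}_{c(i)})\|$, hence $\hat{y}_{e(i)} - \|(\hat{y}_{s(i)}, \hat{y}_{c(i)})\| \leq \hat{y}_{(i)}(t) \leq \hat{y}_{e(i)} + \|(\hat{y}_{s(i)}, \hat{y}_{c(i)})\|$ for all $t$. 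Now take $(\vv{\hat{x}}, \vv{\hat{u}}) \in \cC$. Writing out the membership $(\hat{y}_{e(i)}, \hat{y}_{s(i)}, \hat{y}_{c(i)}) \in \overline{\cc{Y}}_i$ (so that $y_0 = \hat{y}_{e(i)}$ and $y_1 = (\hat{y}_{s(i)}, \hat{y}_{c(i)})$) yields $\hat{y}_{e(i)} + \|(\hat{y}_{s(i)}, \hat{y}_{c(i)})\| \leq \yUBj - \sigma$, and the membership in $\underline{\cc{Y}}_i$ yields $\hat{y}_{e(i)} - \|(\hat{y}_{s(i)}, \hat{y}_{c(i)})\| \geq \yLBj + \sigma$. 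Chaining these with the previous bounds gives $\yLBj + \sigma \leq \hat{y}_{(i)}(t) \leq \yUBj - \sigma$ for all $t$ and all $i \in \Z_1^{n_y}$, that is, $\yLB \leq E\hat{x}(t) + F\hat{u}(t) \leq \yUB$ for all $t$; moreover, when $\sigma > 0$ every inequality above is strict, which establishes the strict-admissibility part of the claim.

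I do not expect a genuine obstacle: the argument is a short computation followed by Cauchy--Schwarz. The only subtlety worth stating explicitly is that $\|(\hat{y}_{s(i)}, \hat{y}_{c(i)})\|$ is the amplitude of the \emph{continuous-time} sinusoid $\theta \mapsto \hat{y}_{s(i)}\sin\theta + \hat{y}_{c(i)}\cos\theta$, which need not be attained at the discrete phases $\theta = wt$, $t \in \Z$; thus the condition $(\vv{\hat{x}}, \vv{\hat{u}}) \in \cC$ is sufficient for admissibility but in general conservative, which is exactly what is required. (One may also observe that $\overline{\cc{Y}}_i \cap \underline{\cc{Y}}_i$ is nonempty only if $\yLBj + 2\sigma \leq \yUBj$, but this is not needed for the implication.) Alternatively, the statement can be recovered from the constraint-tightening property underlying the HMPC formulation in \cite{krupa_harmonic_2022}, but the direct derivation above is self-contained.
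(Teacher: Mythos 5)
Your proof is correct. The paper does not actually spell out an argument here---it simply defers to \cite[Property 3]{krupa_harmonic_2022}---and your derivation (passing to the harmonic output signal $\hat{y}(t)=\hat{y}_e+\hat{y}_s\sin(wt)+\hat{y}_c\cos(wt)$, bounding the oscillatory part componentwise by its amplitude $\|(\hat{y}_{s(i)},\hat{y}_{c(i)})\|$ via Cauchy--Schwarz, and reading off the cone memberships in $\overline{\cc{Y}}_i$ and $\underline{\cc{Y}}_i$) is precisely the standard argument underlying that cited property, including the strictness for $\sigma>0$; your aside that the amplitude need not be attained at the discrete phases $wt$, $t\in\Z$, correctly identifies why the condition is sufficient but conservative.
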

{\renewcommand{\qedsymbol}{}
\begin{proof}
    The proposition directly follows from \cite[Property 3]{krupa_harmonic_2022} considering the definitions of $\hat{y}_e$, $\hat{y}_s$, $\hat{y}_c$, $\overline{\cc{Y}}_i$ and $\underline{\cc{Y}}_i$.
\end{proof}}

\begin{corollary} \label{coro:harmonic}
A harmonic signal $(\hat{x}(\cdot), \hat{u}(\cdot))$ satisfying the conditions of Propositions \ref{prop:harmonic:dynamics} and \ref{prop:harmonic:constraints} is an admissible harmonic signal of system \eqref{eq:model} subject to \eqref{eq:constraints}.
Furthermore, it is a strict admissible harmonic signal if the conditions of Proposition \ref{prop:harmonic:constraints} are satisfied for some $\sigma > 0$.
\end{corollary}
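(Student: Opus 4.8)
The plan is to obtain the statement as a direct conjunction of Propositions~\ref{prop:harmonic:dynamics} and~\ref{prop:harmonic:constraints}, reconciled against Definition~\ref{def:harmonic:admissible}. Recall that, by Definition~\ref{def:harmonic:admissible}, the harmonic signals $(\hat{x}(t), \hat{u}(t))$ of common frequency $w > 0$, parametrized by $(\vv{\hat{x}}, \vv{\hat{u}})$, are admissible for system~\eqref{eq:model} subject to~\eqref{eq:constraints} precisely when two conditions hold for all $t$: the dynamic condition $\hat{x}(t+1) = A\hat{x}(t) + B\hat{u}(t)$ and the constraint condition $\yLB \leq E\hat{x}(t) + F\hat{u}(t) \leq \yUB$; moreover, they are strictly admissible when the latter inequalities hold strictly for all $t$.

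First I would invoke Proposition~\ref{prop:harmonic:dynamics}: since $(\hat{x}(t), \hat{u}(t))$ satisfies the conditions of that proposition, we have $(\vv{\hat{x}}, \vv{\hat{u}}) \in \cD$, and hence $\hat{x}(t+1) = A\hat{x}(t) + B\hat{u}(t)$ for all $t$. Next I would invoke Proposition~\ref{prop:harmonic:constraints}: since $(\hat{x}(t), \hat{u}(t))$ also satisfies the conditions of that proposition, we have $(\vv{\hat{x}}, \vv{\hat{u}}) \in \cC$, and hence $\yLB \leq E\hat{x}(t) + F\hat{u}(t) \leq \yUB$ for all $t$, with the inequalities holding strictly for all $t$ whenever $\sigma > 0$.

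Combining these two facts against Definition~\ref{def:harmonic:admissible} immediately yields that $(\hat{x}(t), \hat{u}(t))$ is an admissible harmonic signal of system~\eqref{eq:model} subject to~\eqref{eq:constraints}, and that it is strictly admissible when the conditions of Proposition~\ref{prop:harmonic:constraints} hold for some $\sigma > 0$. I do not anticipate any real obstacle: the corollary is essentially a bookkeeping statement, and the only point worth checking is that the hypotheses of both propositions are simultaneously assumed and that the two implied properties are exactly the dynamic and constraint parts of the admissibility definition, so that their conjunction reconstitutes admissibility (respectively strict admissibility).
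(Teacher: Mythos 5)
Your proof is correct and matches the paper's (implicit) reasoning exactly: the corollary is stated without proof precisely because it is the immediate conjunction of Propositions~\ref{prop:harmonic:dynamics} and~\ref{prop:harmonic:constraints} checked against Definition~\ref{def:harmonic:admissible}, which is what you do.
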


The key point of Propositions \ref{prop:harmonic:dynamics} and \ref{prop:harmonic:constraints} is that they provide conditions for admissibility of the harmonic signal $(\hat{x}(\cdot), \hat{u}(\cdot))$ that only depend on the parameters $\vv{\hat{x}}$ and $\vv{\hat{u}}$, but not on their frequency $w$.
Satisfaction of the state dynamics is guaranteed by the satisfaction of the linear constraints in $\cD$, whereas satisfaction of the system constraints is guaranteed by the satisfaction of the two second order cone constraints in~$\cC$.

\begin{figure}[t]
    \centering
    \includegraphics[width=0.85\linewidth]{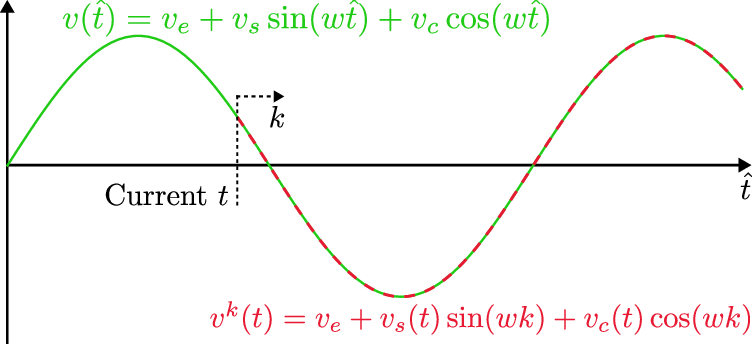}
    \caption{Harmonic signal expressed in the relative time $k$.}
    \label{fig:harmonic:rel}
\end{figure}

In the following, we will be interested in expressing harmonic signals in terms of a relative time $k$ with respect to the current time $t$.
That is, for a fixed time $t$, we want to obtain an expression \eqref{eq:harmonic} for $v^k(t) \doteq v(t+k) \in \R^m$ in the form
\begin{equation} \label{eq:rel:harmonic}
    v^k(t) = v_e(t) + v_s(t) \sin(w k) + v_c(t) \cos(w k)
\end{equation}
for some time-varying parameters $v_e(t), v_s(t), v_c(t) \in \R^m $.
Expression \eqref{eq:rel:harmonic} is simply a time-shift of the underlying harmonic signal $v(\cdot)$, as illustrated in Fig.~\ref{fig:harmonic:rel}, where the current time $t$ is taken as the ``initial time'' of the periodic signal $v^k(\cdot)$, i.e., $v^0(t) = v(t)$.
As shown in \cite[Property 1]{krupa_harmonic_2022}, the parameters $\vv{v}(t) \doteq (v_e(t), v_s(t), v_c(t))$ of \eqref{eq:rel:harmonic} are obtained from the recursion $v_e(0) = v_e$, $v_s(0) = v_s$, $v_c(0) = v_c$,
\begin{equation} \label{eq:rel:harmonic:update}
    \vv{v}(t+1) = \Twa{m} \vv{v}(t),\, \forall t \geq 0,
\end{equation}
where $\Twa{m} \doteq \diag(I_{m}, \Tw{m})$ and 
\begin{equation} \label{eq:def:Tw}
\Tw{m} \doteq \begin{bmatrix} I_m \cos(w) & -I_m \sin(w) \\ I_m \sin(w) & I_m \cos(w) \end{bmatrix} \in \R^{2 m \times 2 m}.
\end{equation}

\section{HMPC for harmonic reference tracking} \label{sec:HMPC}

The idea behind HMPC \cite{krupa_harmonic_2022} is to introduce an artificial harmonic reference in the optimization problem whose discrepancy with the desired reference is penalized in the objective function.
Additionally, the objective function penalizes the discrepancy between the predicted system trajectory and this artificial reference, as is typical in MPC.
In \cite{krupa_harmonic_2022}, HMPC was used to track set-point references, i.e., piecewise affine references $(x_r, u_r) \in \R^{n_x} \times \R^{n_u}$.
In this article, however, the control objective is to track a harmonic reference trajectory $(\xr(\cdot), \ur(\cdot))$, instead of a piecewise constant reference.
At each sample time $t$, this reference trajectory can be equivalently expressed by its relative-time signals
\begin{subequations} \label{eq:reference:k}
\begin{align}
    \xr^k(t) &= \xre + \xrs(t) \sin(w k) + \xrc(t) \cos(w k), \label{eq:xr:k}\\
    \ur^k(t) &= \ure + \urs(t) \sin(w k) + \urc(t) \cos(w k), \label{eq:ur:k}
\end{align}
\end{subequations}
with suitable values of $\xrH(t) \doteq (\xre, \xrs(t), \xrc(t))$ and $\urH(t) \doteq (\ure, \urs(t), \urc(t))$, as discussed in Section~\ref{sec:prelim:harmonic}.
Notice that we make no assumption on the admissibility of the reference.
If is is an admissible harmonic signal (see Definition~\ref{def:harmonic:admissible}) then we wish to converge to it.
Otherwise, we wish to converge to its closest admissible harmonic trajectory, for a criterion of proximity that will be apparent further ahead.

HMPC, as is typical in MPC, uses the notion of receding horizon, where at each sample time $t$ we consider a window of future predictions indexed by $k \in \Z_0^N$, where $N > 0$ is the prediction horizon and $k = 0$ corresponds to the current time~$t$.
The artificial harmonic reference has the same form as the reference \eqref{eq:reference:k}, i.e., sequences $\{\xhj\} \in \R^{n_x}$ and $\{\uhj\} \in \R^{n_u}$ whose values at each prediction time $k \in \Z_0^N$ are given by
\begin{subequations} \label{eq:harmonic:signals}
\begin{align}
    \xhj &= \xe + \xs \sin(w k) + \xc \cos(w k), \label{eq:x_h}\\
    \uhj &= \ue + \us \sin(w k) + \uc \cos(w k), \label{eq:u_h}
\end{align}
\end{subequations}
where the parameters $\xH \doteq (\xe, \xs, \xc)$ and $\uH \doteq (\ue, \us, \uc)$ are decision variables of the HMPC's optimization problem, which, at each sample time $t$ and for a given choice of the prediction horizon $N$ and a reference parameterized by $\xrH(t)$ and $\urH(t)$, is given by
\begin{subequations} \label{eq:HMPC} 
\begin{align}
    \moveEq{-14}\min\limits_{\substack{\vv{x},\vv{u},\\ \xH, \uH}} \;& V_h(\xH, \uH; \xrH(t), \urH(t)) + \Sum{k = 0}{N-1} \ell_h(x^k, u^k, \xhj, \uhj)  \label{eq:HMPC:cost} \\
    {\rm s.t.} \;& x^0 = x(t) \label{eq:HMPC:cond:inic}\\
    &x^{k+1} = A x^k + B u^k, \; k\in\Z_0^{N-1} \label{eq:HMPC:dynamics}\\
    & \yLB \leq E x^k + F u^k \leq \yUB, \; k\in\Z_0^{N-1} \label{eq:HMPC:constraints}\\
    & x^N = \xe + \xs \sin(w N) + \xc \cos(w N) \label{eq:HMPC:xN}\\
    & (\xH, \uH) \in \cD \label{eq:HMPC:D} \\
    & (\xH, \uH) \in \cC, \label{eq:HMPC:C}
\end{align}
\end{subequations}
where $\vv{x} = (x^0, \dots, x^{N-1} )$, $\vv{u} = ( u^0, \dots, u^{N-1} )$, the two terms of the cost function are given by the stage cost function
\begin{equation*}
    \ell_h(x, u, \xh, \uh) = \| x - \xh \|_Q^2 + \| u - \uh \|_R^2,
\end{equation*}
with $Q\in\Sp{n_x}$, $R\in\Sp{n_u}$, and the offset cost function 
\begin{align*}
    V_h(\cdot) &= \| \xe - \xre \|_{T_e}^2 + \| \xs - \xrs(t) \|_{T_h}^2 + \| \xc - \xrc(t) \|_{T_h}^2 \nonumber \\
               & + \| \ue - \ure \|_{S_e}^2 + \| \us - \urs(t) \|_{S_h}^2 + \| \uc - \urc(t) \|_{S_h}^2,
\end{align*}
with $T_e\in\Sp{n_x}$, $T_h \in\Dp{n_x}$, $S_e\in\Sp{n_u}$, and $S_h \in\Dp{n_u}$; and $\sigma > 0$ is taken as an arbitrarily small scalar to avoid a possible controllability loss in the presence of active constraints at an equilibrium point \cite{limon_mpc_2008}.

Constraints \eqref{eq:HMPC:cond:inic}-\eqref{eq:HMPC:constraints} impose the typical MPC constraints, namely, the initial state, system dynamics and system constraints.
Constraint \eqref{eq:HMPC:xN} imposes that the predicted state $x^N$ reaches the value of the artificial harmonic reference at $k = N$.
The equality constraints \eqref{eq:HMPC:D} impose the satisfaction of the system dynamics \eqref{eq:model} on the artificial harmonic reference, as shown in Proposition \ref{prop:harmonic:dynamics}, whereas the second order cone constraints \eqref{eq:HMPC:C} impose the strict satisfaction of the system constraints \eqref{eq:constraints} on the artificial harmonic reference, as shown in Proposition \ref{prop:harmonic:constraints}.
The satisfaction of \eqref{eq:HMPC:D} and \eqref{eq:HMPC:C} implies that the artificial harmonic reference is a strictly admissible harmonic signal of system \eqref{eq:model} subject to \eqref{eq:constraints}, where strict satisfaction of the constraints is attained due to the inclusion of the scalar $\sigma > 0$ in \eqref{eq:Y:sets}, as stated in Corollary \ref{coro:harmonic}.

The cost function of the HMPC formulation penalizes, on one hand, the discrepancy between the predicted states $x^k$ and inputs $u^k$ with the values $\xhj$ and $\uhj$ of the artificial harmonic reference at prediction time $k$, respectively, and on the other hand, the discrepancy between the parameters $(\xH, \uH)$ with $(\xrH(t), \urH(t))$.
The effect is that the artificial harmonic reference will tend towards the reference, while in turn the predicted states will tend towards the artificial harmonic reference.
Let $\vv{x}^*$, $\vv{u}^*$, $\xH^*$ and $\uH^*$ be the optimal solution of \eqref{eq:HMPC}.
The control input $u(t)$ is taken as the first move in the sequence of optimal inputs $\vv{u}^*$.
We note that~\eqref{eq:HMPC} can recover the classical local optimality of MPC for tracking (see \cite[Property~1]{ferramosca_mpc_2009}) if, for instance, linear penalization terms are added to $V_h$, i.e, $\alpha_e \| \xe - \xre \|$, $\alpha_s \| \xs - \xrs(t) \|$, $\alpha_c \| \xc - \xrc(t) \|$, $\beta_e \| \ue - \ure \|$, $\beta_s \| \us - \urs(t) \|$, $\beta_c \| \uc - \urc(t) \|$, with $\alpha_{\{e, s, c\}}, \beta_{\{e, s, c\}} \geq 0$ large enough.
However, in this article we focus on the previously shown quadratic offset cost $V_h$, since it results in a much simpler to solve optimization problem; a typicall choice in linear MPC for tracking for this reason~\cite{krupa_MPCT_implem_2021}.

\subsection{Properties of the HMPC formulation \eqref{eq:HMPC}} \label{sec:HMPC:property}

In this subsection we formally establish the convergence, stability and recursive feasibility guarantees of the HMPC formulation \eqref{eq:HMPC}.
We start by presenting a key concept of the formulation that we label the \textit{optimal reachable harmonic reference}, which plays a mayor role in the convergence results.

\begin{definition}[Optimal reachable harmonic reference]
At sample time $t$, we define the \textit{optimal reachable harmonic reference sequence} of the HMPC formulation \eqref{eq:HMPC} for the given reference $(\xr(t), \ur(t))$ as the harmonic sequences $\{\xho^k\}$, $\{\uho^k\}$ parameterized by the unique solution $\xHo(t) = (\xeo(t), \xso(t), \xco(t))$ and $\uHo(t) = (\ueo(t), \uso(t), \uco(t))$ of
\begin{subequations} \label{eq:xHo:OP}
\begin{align}
    (\xHo(t), \uHo(t)) = \arg\min\limits_{\xH, \uH}\;& V_h(\xH, \uH, \xrH(t), \urH(t)) \\
    {\rm s.t.} \;& (\xH, \uH) \in \cD \cap \cC.
\end{align}
\end{subequations}
\end{definition}

The following lemma establishes the relation between the parameters characterizing the optimal reachable harmonic sequences of two consecutive time instants.
Its proof can be found in the appendix.

\begin{lemma} \label{lemma:xH}
    Assume that $\xrH(t+1) = \Twa{n_x} \xrH(t)$ and $\urH(t+1) = \Twa{n_u} \urH(t)$. Then, $\xHo(t+1) = \Twa{n_x} \xHo(t)$, $\uHo(t+1) = \Twa{n_u} \uHo(t), \forall t$.
\end{lemma}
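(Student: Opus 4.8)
The plan is to exploit the fact that the feasible set $\cD\cap\cC$ of problem \eqref{eq:xHo:OP} is invariant under the linear map $\Twa{n_x}\times\Twa{n_u}$, and that the offset cost $V_h$ is invariant under this same map when the reference parameters are shifted accordingly. Granting these two invariances, the minimizer at time $t+1$ is exactly the image under $\Twa{n_x}\times\Twa{n_u}$ of the minimizer at time $t$, which is the claim; uniqueness (which holds because $V_h$ is strictly convex, as $T_e,T_h,S_e,S_h$ are positive definite, and the feasible set is convex) is what lets us conclude equality rather than mere membership.

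First I would establish that $(\xH,\uH)\in\cD$ if and only if $(\Twa{n_x}\xH,\Twa{n_u}\uH)\in\cD$. This is a direct computation: writing out $\Twa{n_x}\xH=(\xe,\,\xs\cos w-\xc\sin w,\,\xs\sin w+\xc\cos w)$ and similarly for $\uH$, one substitutes into the three defining equations of $\cD$ and checks, using $A(\xs\cos w-\xc\sin w)+B(\us\cos w-\uc\sin w)=(\cdots)$, that the system of three equations is carried to an equivalent system — essentially because $\Tw{n_x}$ commutes with the block-diagonal action of $A$ and $B$ and the rotation structure is exactly what $\cD$ encodes. Since $\Twa{}$ is invertible (orthogonal), this gives the ``if and only if''. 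Next I would do the same for $\cC$: the map $\Twa{n_y}$ acts on each triple $(\hat y_{e(i)},\hat y_{s(i)},\hat y_{c(i)})$ by fixing the first component and rotating the last two, and since the sets $\overline{\cc Y}_i$, $\underline{\cc Y}_i$ are defined by $\|y_1\|\le \overline y_{(i)}-\sigma-y_0$ (resp. $\le y_0-\underline y_{(i)}-\sigma$), which depend on $y_1$ only through its Euclidean norm, they are invariant under rotations of $y_1$. Hence $\cC$ is invariant under $\Twa{n_x}\times\Twa{n_u}$, and therefore so is $\cD\cap\cC$. (Alternatively, one can invoke Corollary~\ref{coro:harmonic}: $(\xH,\uH)$ parametrizes a strictly admissible harmonic signal iff its time-shift does, and the time-shift is given by \eqref{eq:rel:harmonic:update}.)

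Then I would show the cost invariance: for any $(\xH,\uH)$,
\[
V_h(\Twa{n_x}\xH,\Twa{n_u}\uH;\,\Twa{n_x}\xrH(t),\Twa{n_u}\urH(t)) = V_h(\xH,\uH;\,\xrH(t),\urH(t)).
\]
The $e$-components are untouched by $\Twa{}$, so those terms are literally unchanged. For the $s$- and $c$-components, note $\|\xs\cos w-\xc\sin w-(\xrs\cos w-\xrc\sin w)\|_{T_h}^2+\|\xs\sin w+\xc\sin w-(\cdots)\|_{T_h}^2$ is the squared $T_h$-norm of $\Tw{n_x}\big((\xs,\xc)-(\xrs,\xrc)\big)$; since $T_h$ is diagonal it commutes with $\Tw{n_x}$ (which is block scalar-times-rotation), and $\Tw{n_x}$ is orthogonal, so this equals $\|\xs-\xrs\|_{T_h}^2+\|\xc-\xrc\|_{T_h}^2$. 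The same argument applies to the input terms with $S_h$. Combining the three ingredients: by hypothesis $\xrH(t+1)=\Twa{n_x}\xrH(t)$ and $\urH(t+1)=\Twa{n_u}\urH(t)$, so the problem defining $(\xHo(t+1),\uHo(t+1))$ has feasible set $\cD\cap\cC$ and objective $V_h(\cdot;\Twa{n_x}\xrH(t),\Twa{n_u}\urH(t))$; pulling back through the bijection $(\xH,\uH)\mapsto(\Twa{n_x}\xH,\Twa{n_u}\uH)$ of $\cD\cap\cC$ onto itself and using the cost invariance, this is the same optimization problem as the one defining $(\xHo(t),\uHo(t))$, up to that bijection. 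Hence $(\xHo(t+1),\uHo(t+1))=(\Twa{n_x}\xHo(t),\Twa{n_u}\uHo(t))$ by uniqueness of the minimizer.

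The only mildly delicate point — and the place I would be most careful — is the commutation $T_h\Tw{n_x}=\Tw{n_x}T_h$ (and the analogous one for $S_h$), which is precisely why the problem statement insists $T_h\in\Dp{n_x}$ and $S_h\in\Dp{n_u}$ rather than merely positive definite: a general positive-definite $T_h$ would not commute with the rotation blocks and the cost invariance would fail. Everything else is routine linear algebra, so I would state the diagonality-commutation fact explicitly and let the rest follow.
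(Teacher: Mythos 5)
Your proof is correct and follows essentially the same route as the paper's: a change of variables through the bijection $(\xH,\uH)\mapsto(\Twa{n_x}\xH,\Twa{n_u}\uH)$, invariance of $\cD\cap\cC$ and of the cost $V_h$ under that map (the latter via the orthogonality of $\Tw{n_x}$ and the block structure of the weight), and uniqueness of the minimizer from strict convexity. One minor quibble: the commutation $\diag(T_h,T_h)\,\Tw{n_x}=\Tw{n_x}\,\diag(T_h,T_h)$ in fact holds for \emph{any} $T_h$, since the blocks of $\Tw{n_x}$ are scalar multiples of $I_{n_x}$, so your closing claim that cost invariance would fail for a general positive-definite $T_h$ is not accurate --- though this does not affect the validity of your argument, which only needs the (true) diagonal case.
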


The main consequence of Lemma \ref{lemma:xH} is that the optimal reachable harmonic reference is in fact a unique trajectory $(\xho(\cdot), \uho(\cdot))$ for each harmonic reference trajectory $(\xr(\cdot), \ur(\cdot))$.
We formalize this in the following corollary.

\begin{corollary} \label{cor:xH}
The optimal reachable harmonic reference sequences obtained at subsequent times $t$ define a unique trajectory $(\xho(\cdot), \uho(\cdot))$ given by
\begin{align*}
    \xho(t) &= \xeo + \xso \sin(w t) + \xco \cos (w t), \\
    \uho(t) &= \ueo + \uso \sin(w t) + \uco \cos (w t),
\end{align*}
where $\xeo = \xeo(0)$, $\xso = \xso(0)$, $\xco = \xco(0)$, $\ueo = \ueo(0)$, $\uso = \uso(0)$ and $\uco = \uco(0)$.
\end{corollary}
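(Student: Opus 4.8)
The plan is to derive the closed form directly from Lemma~\ref{lemma:xH} together with the standard parameter recursion~\eqref{eq:rel:harmonic:update}. First I would check that the hypothesis of the lemma holds in our setting: since the reference is the harmonic trajectory~\eqref{eq:reference:k}, its relative-time parameters satisfy $\xrH(t+1) = \Twa{n_x}\xrH(t)$ and $\urH(t+1) = \Twa{n_u}\urH(t)$ for all $t \geq 0$ by~\eqref{eq:rel:harmonic:update}. Lemma~\ref{lemma:xH} then yields $\xHo(t+1) = \Twa{n_x}\xHo(t)$ and $\uHo(t+1) = \Twa{n_u}\uHo(t)$ for all $t \geq 0$; that is, the parameters of the optimal artificial harmonic reference obey exactly the recursion~\eqref{eq:rel:harmonic:update} with $m = n_x$ and $m = n_u$, respectively.

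Next I would unfold this recursion starting from $t = 0$. Writing $\Twa{m} = \diag(I_m, \Tw{m})$, the ``$e$'' block is left unchanged, so $\xeo(t) = \xeo(0)$ and $\ueo(t) = \ueo(0)$ for all $t$, while the remaining components are propagated by the orthogonal rotation block $\Tw{n_x}$ (resp.\ $\Tw{n_u}$). Solving this planar rotation recursion with the angle-addition identities gives $\xso(t) = \cos(wt)\,\xso(0) - \sin(wt)\,\xco(0)$ and $\xco(t) = \sin(wt)\,\xso(0) + \cos(wt)\,\xco(0)$, and analogously for $(\uso(t), \uco(t))$. By the trajectory/sequence convention we have $\xho(t) = \xho^0(t)$, which by the harmonic form~\eqref{eq:rel:harmonic} of the sequence $\{\xho^k\}$ equals $\xeo(t) + \xco(t)$ at $k = 0$; substituting the expressions just obtained and setting $\xeo \doteq \xeo(0)$, $\xso \doteq \xso(0)$, $\xco \doteq \xco(0)$ yields $\xho(t) = \xeo + \xso \sin(wt) + \xco \cos(wt)$, and the identical computation for the input gives the second identity. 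Uniqueness of the trajectory is inherited from the uniqueness of $\xHo(t)$ and $\uHo(t)$ at each $t$, which follows from the strict convexity of $V_h$ in $(\xH, \uH)$ (its Hessian is block-diagonal with blocks $T_e, T_h, T_h, S_e, S_h, S_h \succ 0$) and is already part of the definition of the optimal artificial harmonic reference.

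Since all of the substantive work is carried out in Lemma~\ref{lemma:xH}, I do not expect a genuine obstacle here; the only point requiring care is the bookkeeping of which of $\xso$, $\xco$ carries the sine- and which the cosine-coefficient as it passes through $\Tw{m}$, and the fact that at relative time $k = 0$ it is the cosine-coefficient $\xco(t)$ that recovers the value of the trajectory at time $t$. A slightly slicker route that sidesteps even this is to note directly that the harmonic trajectory $\bar{x}(t) \doteq \xeo(0) + \xso(0)\sin(wt) + \xco(0)\cos(wt)$ has, by~\eqref{eq:rel:harmonic:update}, relative-time parameters that evolve by the same recursion as $\xHo(t)$ and coincide with it at $t = 0$; hence $\xHo(t)$ is, for every $t$, the parameter vector of the relative-time shift of this single trajectory, and evaluating at $k = 0$ gives $\xho(t) = \bar{x}(t)$ (and likewise for $\uho(t)$), which is exactly the claimed expression.
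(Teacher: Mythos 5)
Your proposal is correct and follows exactly the route the paper intends: the paper states this corollary without a separate proof, as an immediate consequence of Lemma~\ref{lemma:xH} combined with the parameter recursion \eqref{eq:rel:harmonic:update}, and your unfolding of the rotation $\Tw{m}$ via the angle-addition identities (with the observation that $\xho(t)=\xeo(t)+\xco(t)$ at relative time $k=0$) is precisely the bookkeeping the authors leave implicit. The only substantive checks — that the harmonic reference satisfies the lemma's hypothesis and that uniqueness comes from the strict convexity of $V_h$ — are both handled correctly in your argument.
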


The following theorems state the recursive feasibility and asymptotic stability of the HMPC formulation \eqref{eq:HMPC}, where recursive stability is maintained even if the reference trajectory $(\xr(\cdot), \ur(\cdot))$ is changed between sample times and asymptotic stability is satisfied with respect to the optimal reachable harmonic reference trajectory $\xho(\cdot)$.
As a results, it is easy to verify from~\eqref{eq:xHo:OP} that the system will converge, under nominal conditions, to the desired reference trajectory $\xr(\cdot)$ if $(\xr(\cdot), \ur(\cdot))$ is a strictly admissible harmonic trajectory, i.e., if $(\xrH(0), \urH(0)) \in \cD \cap \cC$, where $t = 0$ for convenience, since any value of $t$ can be taken\footnote{We note that $\sigma > 0$ is included for technical reasons, but can be chosen as an arbitrary small number. Thus, as long as the reference trajectory does not reach an active constraint~\eqref{eq:constraints}, a sufficiently small $\sigma$ can always be selected.}.
If the reference is not admissible, then the system will asymptotically converge to the ``closest" harmonic trajectory to $(\xr(\cdot), \ur(\cdot))$, where ``closeness" is defined in terms of the offset cost function $V_h(\cdot)$.
We refer the reader to the appendix for sketches of the proofs of the following theorems, and to preprint~\cite{krupa_ellipHMPC_arXiv_2023_v1} for detailed proofs.

\begin{theorem}[Recursive feasibility] \label{theo:feasibility}
    Suppose that $x(t)$ belongs to the feasibility region of the HMPC controller \eqref{eq:HMPC} for some fixed choice of $N > 0$ and that $\vv{x}$, $\vv{u}$, $\xH$, $\uH$ constitute any feasible solution of \eqref{eq:HMPC} for $x(t)$ and the reference $(\xrH(t), \urH(t))$.
    Then, the successor state ${x(t+1) = A x(t) + B u^0}$ belongs to the feasibility region of \eqref{eq:HMPC} for any $(\tilde{\vv{x}}_r(t+1), \tilde{\vv{u}}_r(t+1))$ not necessarily equal to $(\Twa{n_x}\xrH(t), \Twa{n_u}\urH(t))$.
\end{theorem}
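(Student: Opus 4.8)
The plan is the standard ``shift the solution'' argument for MPC with an artificial reference, adapted to the harmonic setting: from a feasible tuple at time $t$ I will build an \emph{explicit} feasible candidate at time $t+1$ by advancing every trajectory by one sample and filling the freed terminal slot using the artificial harmonic reference itself, updated by the rotation map of \eqref{eq:rel:harmonic:update}.

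Concretely, let $(\vv{x},\vv{u},\xH,\uH)$ be feasible for \eqref{eq:HMPC} at $x(t)$. First I would record the facts I will use: since $(\xH,\uH)\in\cD$, Proposition~\ref{prop:harmonic:dynamics} gives $\xhj[k+1] = A\xhj + B\uhj$ for all $k$ (here I write $\xhj$, $\uhj$ for the harmonic sequence in \eqref{eq:harmonic:signals}); since $(\xH,\uH)\in\cC$, Proposition~\ref{prop:harmonic:constraints} gives $\yLB < E\xhj + F\uhj < \yUB$ for all $k$; and the terminal constraint \eqref{eq:HMPC:xN} gives $x^N = \xhj[N]$. As the artificial reference parameters at $t+1$ I take $\xH^{+} = \Twa{n_x}\xH$ and $\uH^{+} = \Twa{n_u}\uH$, so that by \eqref{eq:rel:harmonic:update} the associated harmonic sequence at $t+1$ is the one-step time shift of the one at $t$, i.e. its value at relative time $k$ equals $\xhj[k+1]$ (and likewise for the input). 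As the predicted trajectory I take $\tilde x^k = x^{k+1}$ for $k\in\Z_0^{N-1}$, $\tilde u^k = u^{k+1}$ for $k\in\Z_0^{N-2}$, and I close the window with $\tilde u^{N-1} = \uhj[N]$ and $\tilde x^N = A\xhj[N] + B\uhj[N] = \xhj[N+1]$.

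I would then verify the constraints of \eqref{eq:HMPC} for this candidate. Constraint \eqref{eq:HMPC:cond:inic} holds because $\tilde x^0 = x^1 = Ax^0 + Bu^0 = Ax(t)+Bu^0 = x(t+1)$. The dynamics \eqref{eq:HMPC:dynamics} for $k\le N-2$ is a re-indexing of the dynamics at $t$, and for $k=N-1$ it holds by the choice of $\tilde u^{N-1}$, $\tilde x^N$ together with $x^N=\xhj[N]$ and the dynamics of the artificial reference. The path constraints \eqref{eq:HMPC:constraints} for $k\le N-2$ are inherited from time $t$, and for $k=N-1$ we get $E\tilde x^{N-1}+F\tilde u^{N-1} = E\xhj[N] + F\uhj[N]$, which lies strictly inside $[\yLB,\yUB]$ by Proposition~\ref{prop:harmonic:constraints}. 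The terminal constraint \eqref{eq:HMPC:xN} at $t+1$ holds because $\tilde x^N = \xhj[N+1]$ is precisely the value of the shifted harmonic sequence at relative time $N$. Finally, $(\xH^{+},\uH^{+})\in\cD$ and $(\xH^{+},\uH^{+})\in\cC$ follow from the invariance of both sets under $(\xH,\uH)\mapsto(\Twa{n_x}\xH,\Twa{n_u}\uH)$: the three equalities defining $\cD$ are preserved by the rotation $\Tw{m}$ via the angle-addition identities (equivalently, the time-shifted harmonic signal still satisfies the dynamics), while the conic constraints defining $\cC$ involve $(\hat y_{s(i)},\hat y_{c(i)})$ only through $\|(\hat y_{s(i)},\hat y_{c(i)})\|$, which is invariant under the $2\times2$ rotation induced by $\Tw{m}$, and leave $\hat y_{e(i)}$ unchanged.

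The decisive observation — and the reason the statement may allow an arbitrary new reference $(\tilde{\vv{x}}_r(t+1),\tilde{\vv{u}}_r(t+1))$ — is that neither the construction of the candidate nor the verification of any of the constraints \eqref{eq:HMPC:cond:inic}--\eqref{eq:HMPC:C} uses $(\xrH(t),\urH(t))$ at all: the reference enters \eqref{eq:HMPC} only through the cost term $V_h$, never through a constraint. Hence the candidate is feasible for \eqref{eq:HMPC} at $x(t+1)$ for \emph{every} choice of reference, which is exactly the claim. I expect the only point requiring genuine care to be the index bookkeeping tying $\tilde x^N$ to $\xhj[N+1]$ through \eqref{eq:rel:harmonic:update}, together with the invariance of $\cD$ and $\cC$ under the rotation; once these are in place the remaining checks are routine re-indexing.
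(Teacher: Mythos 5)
Your proposal is correct and is essentially the paper's own argument: the same one-step shift of the predicted trajectory, the terminal slot filled with the artificial harmonic reference, the new artificial parameters given by the rotation $(\Twa{n_x},\Twa{n_u})$ (which, because $(\xH,\uH)\in\cD$, coincides with the paper's definition $\bxe=A\xe+B\ue$, etc.), and invariance of $\cD$ and $\cC$ under that rotation to close the argument. The observation that the reference never enters the constraints, so the new reference can be arbitrary, is likewise exactly how the paper obtains the ``for any $(\tilde{\vv{x}}_r(t+1),\tilde{\vv{u}}_r(t+1))$'' part of the claim.
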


\begin{theorem}[Asymptotic stability] \label{theo:stability}
Consider a controllable system \eqref{eq:model} subject to \eqref{eq:constraints} controlled with the HMPC formulation \eqref{eq:HMPC} with $N$ greater or equal to the controllability index of the system.
Then, for any given harmonic reference trajectory $(\xr(\cdot), \ur(\cdot))$ and initial state $x(0)$ belonging to the feasibility region of the HMPC formulation \eqref{eq:HMPC}, the closed-loop system trajectory $x(\cdot)$ is stable, satisfies the system constraints for all $t$, and asymptotically converges\footnote{That is, $\| x(t) - \xho(t) \| \to 0$ as $t \to \infty$, c.f.~\cite[Appendix~B.2]{rawlings_model_2017}.}~to the~optimal reachable harmonic reference trajectory
$\xho(\cdot)$ given by Corollary \ref{cor:xH}.
That is, there exists a $\mathcal{K}\mathcal{L}$ function $\beta(\cdot)$ satisfying $\| x(t) - \xho(t) \| \leq \beta(\| x(0) - \xho(0) \|, t)$, $\forall t \geq 0$, c.f.~\cite[Theorem.~B.15]{rawlings_model_2017}.
\end{theorem}

\begin{remark} \label{rem:shape}
An interesting consequence of the parametrization of the reference is the effect it has on the optimal reachable harmonic reference, i.e., on the reference to which the closed-loop system converges to, particularly when the reference $(\xr(\cdot), \ur(\cdot))$ is non-admissible.
The terms $\| \xe - \xre \|_{T_e}^2$ and $\| \ue - \ure \|_{S_e}^2$ of the offset cost function $V_h$ penalize the distance between the ``centers'' of both references, whereas the other terms penalize the discrepancy between the parameters that characterize the sine and cosine terms, which intuitively can be seen as a penalization on the discrepancy between the ``shapes'' of the references.
Therefore, if $T_h$ and $S_h$ are significantly larger that $T_e$ and $S_e$, the closed-loop system will converge to the harmonic trajectory that is closest to the given reference but that retains its shape, as shown in Section~\ref{sec:results:harmonic}.
\end{remark}

\subsection{Numerically solving the HMPC formulation \eqref{eq:HMPC}} \label{sec:HMPC:solving}

In \cite{krupa_efficiently_2022}, the authors present a method for efficiently solving the HMPC formulation for set-point tracking from \cite{krupa_harmonic_2022} that is applied to the alternating direction method of multipliers (ADMM) algorithm \cite{boyd_distributed_2010} to obtain a sparse solver that is available in the open-source Matlab toolbox SPCIES \cite{krupa_spcies_2020}.
The results in \cite{krupa_efficiently_2022} show that the HMPC formulation can be solved in times comparable to other MPC formulations using state-of-the-art solvers.
The same ADMM-based solver can be applied to \eqref{eq:HMPC} by making very minor changes, since the reference only affects a submatrix of the Hessian of \eqref{eq:HMPC}.
In fact, the solver for \eqref{eq:HMPC} is also available in \cite[\texttt{v0.3.11}]{krupa_spcies_2020}.

\begin{remark}
Note that the information of the reference is provided to the HMPC formulation with the parameters $\xrH(t)$ and $\urH(t)$, which are independent of the value of its period (determined by $w$).
This, along with the way in which the system dynamics and constraints are imposed on the artificial harmonic reference, i.e., by means of \eqref{eq:HMPC:D} and \eqref{eq:HMPC:C}, leads to an optimization problem whose complexity does not depend on the value of $w$.
This is not the case in other periodic MPC formulations \cite{limon_mpc_2016}, where the number of constraints grows with the period of the reference trajectory, resulting in a increase of the computational complexity of the solver.
\end{remark}

\section{Tracking arbitrary references} \label{sec:arbitrary:ref}

In this section we discuss the application of HMPC~\eqref{eq:HMPC} to tracking generic references, i.e., we no longer assume that the reference $(\xr(\cdot), \ur(\cdot))$ describes a harmonic signal (see Definition~\ref{def:harmonic}), although we do assume that is satisfies the system dynamics, i.e., $\xr(t+1) = A \xr(t) + B \ur(t)$, $\forall t$.

In general, HMPC will not be able to track the reference $\xr(\cdot)$ due to the use of a single-harmonic artificial reference, i.e., a harmonic signal with a single frequency $w$.
However, we find that it is often able to track a suitably selected output $y_r = C \xr$.
To do so, we propose the following method: at each sample time $t$, a local harmonic approximation $(\tilde{x}_r(t), \tilde{u}_r(t))$ of the reference $(\xr(t), \ur(t))$ is computed to satisfy
\begin{align*}
    &\tilde{x}_r(t) = \xr(t), \, \tilde{x}_r(t{+}N) = \xr(t{+}N), \, \tilde{x}_r'(t{+}N) = \xr'(t{+}N), \\
    &\tilde{u}_r(t) = \ur(t), \, \tilde{u}_r(t{+}N) = \ur(t{+}N), \, \tilde{u}_r'(t{+}N) = \ur'(t{+}N),
\end{align*}
where $a'(t)$ is the time-derivative of $a$ evaluated at time $t$, and is provided as the reference to the HMPC controller.
The objective is to obtain a local reference that provides a good approximation of the desired reference trajectory $(\xr(\cdot), \ur(\cdot))$ between the current sample time $t$ and $t+N$.
We find that the best choice of $w$ is to follow the guidelines from \cite[\S VI]{krupa_harmonic_2022}.

\section{Case study} \label{sec:results}

We show various numerical results of the application of the HMPC formulation to control the ball and plate system described in \cite[\S V.A]{krupa_harmonic_2022}.
The control objective of this system is to control the position of a solid ball that rests on a horizontal plate.
To do so, the inclination of the plate can be manipulated using two independent motors located on each of its main axes.

To improve the numerical conditioning of the solvers, we scale the inputs by a factor of $50$.
We take the HMPC ingredients as $Q = \diag(10, 5, 5, 5, 10, 5, 5, 5)$, $R = 0.5 I_2$, $T_e = 50 Q$, $T_h = 0.1 T_e$, $S_e = 10 I_2$, $S_h = 0.5 S_e$, $N = 8$, and include an additional constraint on the position of the ball on the plate in the form of a regular hexagon with vertices at a distance of $1$ meter from the origin.

We also consider the following MPC formulations:
\begin{itemize}
    \item The \emph{periodic MPC for Tracking} (\emph{perMPCT}) formulation~\cite{limon_mpc_2016}, whose offset cost function matrices we take as matrices $T_e$ and $S_e$ of the HMPC formulation, cost function matrices $Q$ and $R$ as the ones of the HMPC formulation, and prediction horizon also as $N = 8$.
    \item The standard MPC formulation with terminal equality constraint \cite[Eq. (8)]{krupa_implementation_2021}, which we label \emph{equMPC}, but considering a trajectory reference instead of a steady-state reference. We take $N = 16$, and the cost function matrices $Q$ and $R$ as the ones of the HMPC formulation.
        This formulation is the classical MPC for generic reference tracking with a terminal equality constraint and no terminal cost (c.f. \cite[\S 2]{kohler_AUT_2020}); a typical approach when considering generic references, since it avoids the need of computing a terminal invariant set.
        The issue with this formulation is that it does not guarantee recursive feasibility.
        In particular, feasibility is lost if the reference does not satisfy the system constraints.
\end{itemize}

\subsection{Tracking a harmonic reference} \label{sec:results:harmonic}

{\renewcommand{\arraystretch}{1.0}%
    \begin{table*}[t]
    \setlength{\tabcolsep}{2.2pt}
    \centering
	\begin{threeparttable}
    \begin{tabular}{l|lrrrrrrrrrrrrrrrrrrrr}
        \multicolumn{2}{l}{}& \multicolumn{9}{c}{Admissible reference} & \multicolumn{9}{c}{Non-admissible reference} \\
        \cmidrule(lr){3-11}\cmidrule(lr){12-20}
        \multicolumn{2}{l}{}& \multicolumn{4}{c}{Computation time [ms]} & \multicolumn{4}{c}{Number of iterations} & Perf.~\eqref{eq:performance} & \multicolumn{4}{c}{Computation time [ms]} & \multicolumn{4}{c}{Number of iterations} & Perf.~\eqref{eq:performance} \\
        \cmidrule(lr){3-6}\cmidrule(lr){7-10}\cmidrule(lr){11-11}\cmidrule(lr){12-15}\cmidrule(lr){16-19}\cmidrule(lr){20-20}
        \multicolumn{1}{r}{} & MPC (solver) & Avrg. & Med. & Max. & Min. & Avrg. & Med. & Max. & Min. & $\Psi_{20}$ & Avrg. & Med. & Max. & Min. & Avrg. & Med. & Max. & Min. & $\Psi_{20}$\\
        \cmidrule(lr){2-2} \cmidrule(lr){3-20}
        \multirow{4}{*}{\rotatebox[origin=c]{90}{{\scriptsize Harmonic ref.}}}
                             & HMPC (SPCIES) & 0.22 & 0.22 & 0.38 & 0.19 & 28.83 & 29.0 & 29 & 25 & $91.31$
                                             & 4.69 & 4.82 & 5.87 & 0.27 & 636.41 & 653.0 & 733 & 32 & $1739.49$ \\
                             & HMPC (SCS) & 1.64 & 1.52 & 4.17 & 1.38 & 131.87 & 125.0 & 400 & 125 & $91.28$
                                          & 3.63 & 3.55 & 10.80 & 2.87 & 357.61 & 350.0 & 1150 & 275 & $1741.05$ \\
                                          & \emph{perMPCT} (OSQP) & 3.11 & 2.98 & 18.99 & 2.58 & 105.08 & 100.0 & 725 & 100 & $99.86$
                                              & 4.08 & 3.92 & 29.81 & 3.07 & 150.51 & 150.0 & 1200 & 125 & $1733.32$ \\
                                               & \emph{equMPC} (OSQP) & 0.63 & 0.58 & 6.32 & 0.50 & 55.12 & 50.0 & 775 & 50 & $88.86$
                                             & - & - & - & - & - & - & - & - & - \\
        \cmidrule(lr){2-20}
        \multirow{3}{*}{\rotatebox[origin=c]{90}{{\scriptsize Generic ref. }}}
                             & HMPC (SPCIES) & 0.12 & 0.12 & 0.19 & 0.08 & 15.09 & 16.0 & 17 & 11 & $55.30$
                                             & 0.75 & 0.14 & 4.22 & 0.09 & 100.11 & 17.0 & 512 & 13 & $268.40$ \\
                             & HMPC (SCS) & 6.49 & 6.51 & 11.77 & 1.83 & 678.92 & 675.0 & 1250 & 175 & $55.40$
                                          & 8.21 & 7.17 & 24.96 & 1.83 & 897.99 & 775.0 & 2575 & 175 & $269.03$ \\
                                          & \emph{perMPCT} (OSQP) & 7.43 & 6.39 & 36.74 & 3.84 & 148.89 & 125.0 & 700 & 75 & $67.76$
                                              & 22.82 & 22.71 & 31.33 & 14.39 & 532.08 & 525.0 & 750 & 325 & $235.04$ \\
                                                & \emph{equMPC} (OSQP) & 0.61 & 0.57 & 6.48 & 0.48 & 54.74 & 50.0 & 925 & 50 & $48.38$
                                              & - & - & - & - & - & - & - & - & - \\
        \cmidrule(lr){2-20}
    \end{tabular}
\caption{Computational results of the simulations shown in Fig.~\ref{fig:admissible}-\ref{fig:multiple:non}.}
    \label{tab:result}
	\end{threeparttable}
\end{table*}}

We show results comparing the above MPC formulations to track a harmonic reference with base frequency $w = \pi/16$, whose period is therefore of $T = 32$ samples.

We solve \eqref{eq:HMPC} using the solver presented in \cite{krupa_efficiently_2022}, available in the SPCIES toolbox \cite[\texttt{v0.3.11}]{krupa_spcies_2020}, taking its parameter $\rho = 150$.
We also solve \eqref{eq:HMPC} using version \texttt{3.2.3} of the SCS solver \cite{ODonoghue_SCS_21}.
We solve all other MPC formulations using the OSQP solver \texttt{v0.6.2} \cite{stellato_osqp_2020}.
We take the exit tolerances of SPCIES and OSQP as $10^{-4}$, and the ones of SCS as $10^{-6}$, since it is the largest tolerance for which we got reasonable suboptimal solutions.
Tests are performed on a $2.3$GHz Intel~i7 in MATLAB using the C-MEX interface of the solvers.

Fig.~\ref{fig:admissible} shows the closed-loop trajectory of the system when tracking an admissible harmonic reference (depicted in red).
We show the results when using the HMPC and \emph{perMPCT} formulations in Fig.~\ref{fig:admissible:HMPC} and~\ref{fig:admissible:perMPCT}, respectively, as well as the control inputs for both formulations in Fig.~\ref{fig:admissible:input}.
The results indicate that the HMPC controller behaves similarly to the \emph{perMPCT} controller when the reference is admissible.

Fig.~\ref{fig:non-admissible} is analogous to Fig.~\ref{fig:admissible} but taking a non-admissible harmonic reference.
Fig.~\ref{fig:non-admissible:Te} shows the result using \emph{perMPCT} and the HMPC described above, where $T_h = 0.1 T_e$.
Fig.~\ref{fig:non-admissible:Th} shows the results with HMPC if we instead take $T_h = 100 T_e$.
Finally, Fig.~\ref{fig:non-admissible:input} shows the control inputs corresponding to Fig.~\ref{fig:non-admissible:Te}.
Comparing Fig.~\ref{fig:non-admissible:Te} and \ref{fig:non-admissible:Th} we see that when the $T_e$ and $S_e$ terms are dominant, HMPC behaves similarly to \emph{perMPCT}.
However, when $T_h$ and $S_h$ are dominant, the behavior of the closed-loop system changes drastically, as shown in Fig.~\ref{fig:non-admissible:Th}.
The reason behind this behavior is the offset cost function $V_h$, which does not penalize the discrepancy between the artificial reference with the desired reference, but instead between the parameters that characterize them, as discussed in Remark~\ref{rem:shape}.
This behavior differs from the one expected from classical MPC formulations. 
However, it might be very interesting for those applications in which maintaining the shape of the trajectory is more important than being closer to the reference at each individual sample time.
Some potential applications are power electronics, where we want the output to be as close as possible to a perfect harmonic signal, possibly at the cost of decreasing the power-output, or aerospace rendezvous.

\subsection{Tracking an arbitrary periodic reference} \label{sec:results:arbitrary}

\begin{figure*}[hbtp]
    \centering
    \begin{subfigure}[ht]{0.26\textwidth}
        \includegraphics[width=\linewidth]{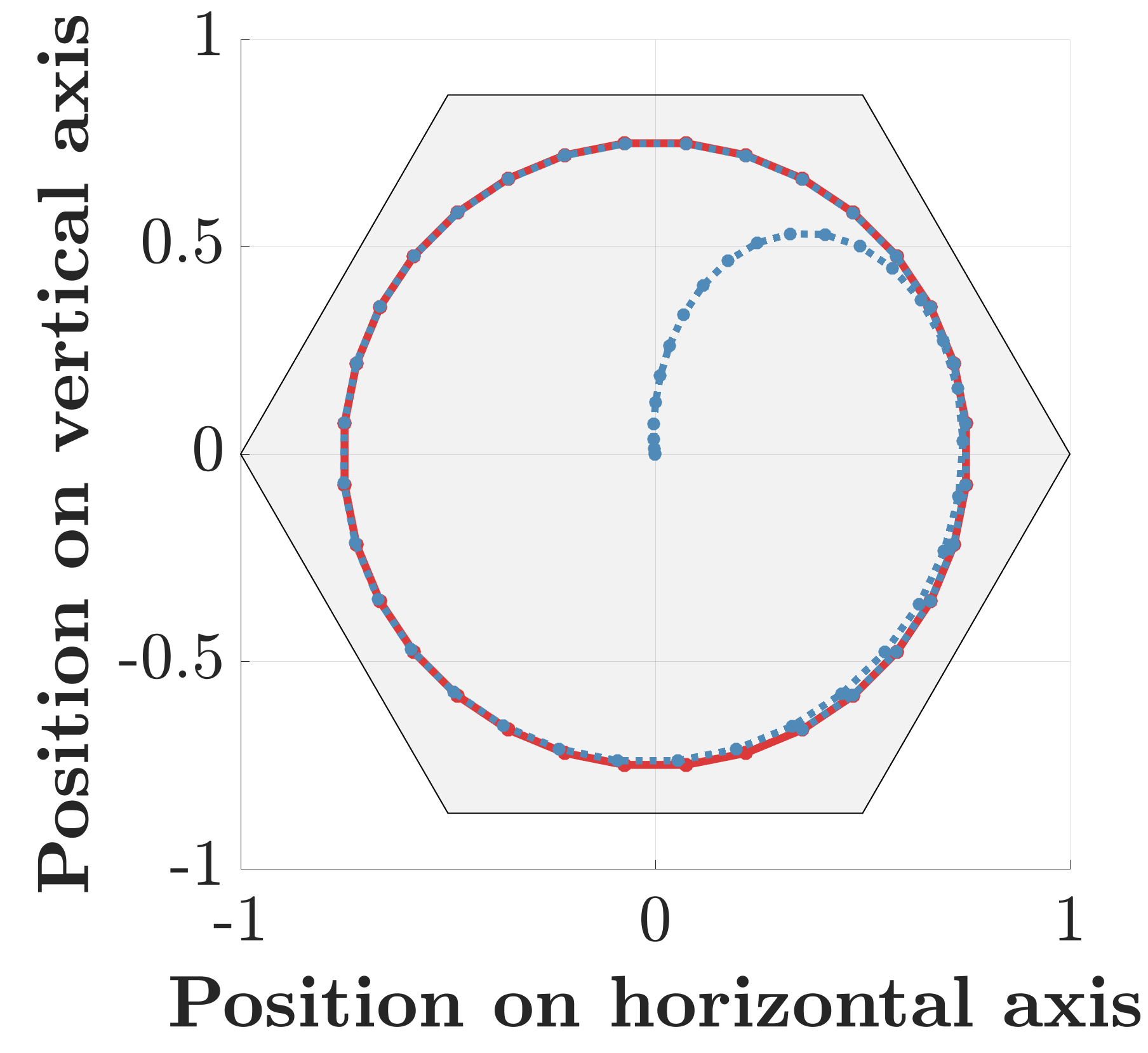}
        \caption{Position using HMPC.}
        \label{fig:admissible:HMPC}
    \end{subfigure}%
    \hfill
    \begin{subfigure}[ht]{0.26\textwidth}
        \includegraphics[width=\linewidth]{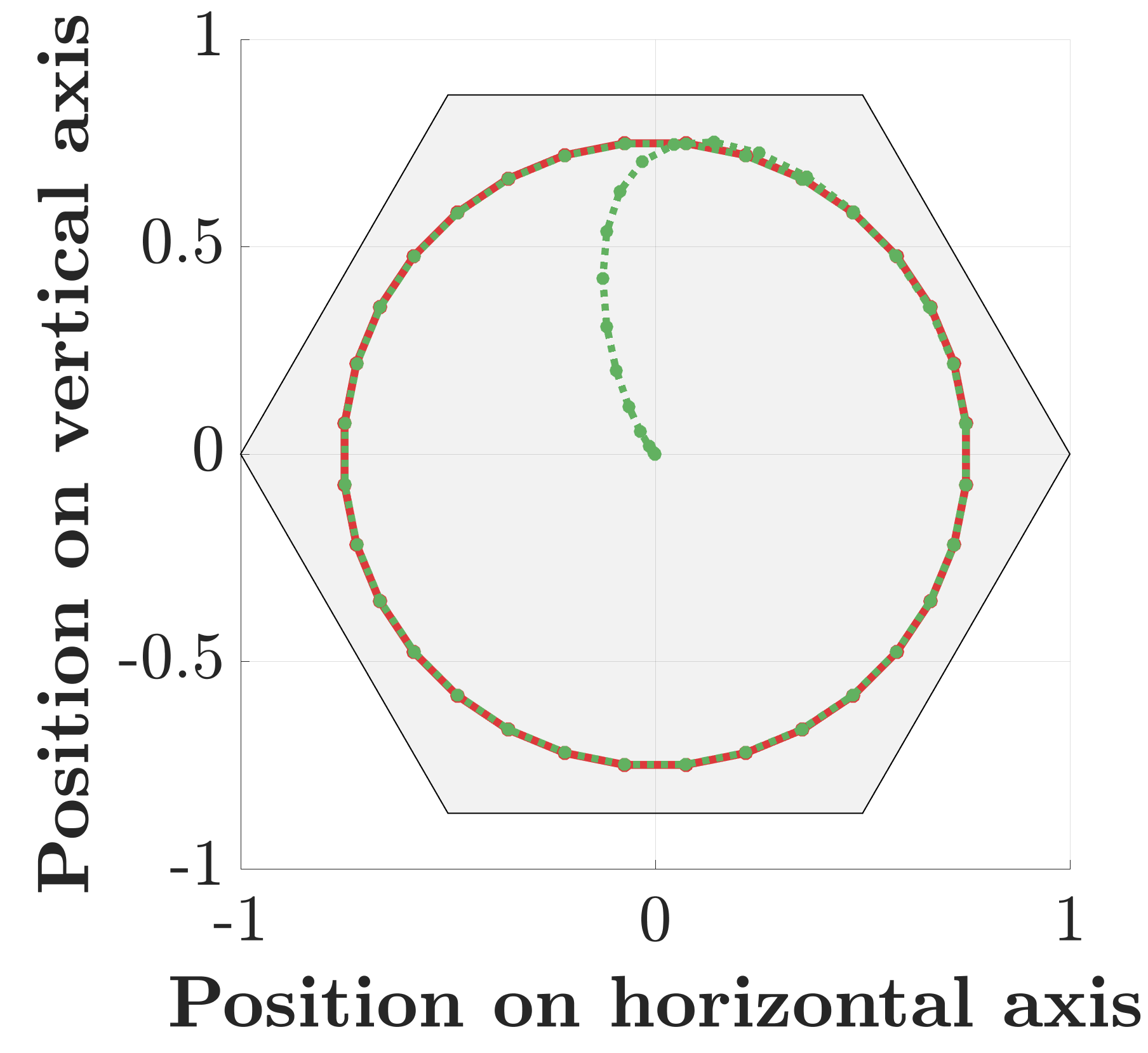}
        \caption{Position using \emph{perMPCT}.}
        \label{fig:admissible:perMPCT}
    \end{subfigure}%
    \hfill
    \begin{subfigure}[ht]{0.43\textwidth}
        \includegraphics[width=1\linewidth]{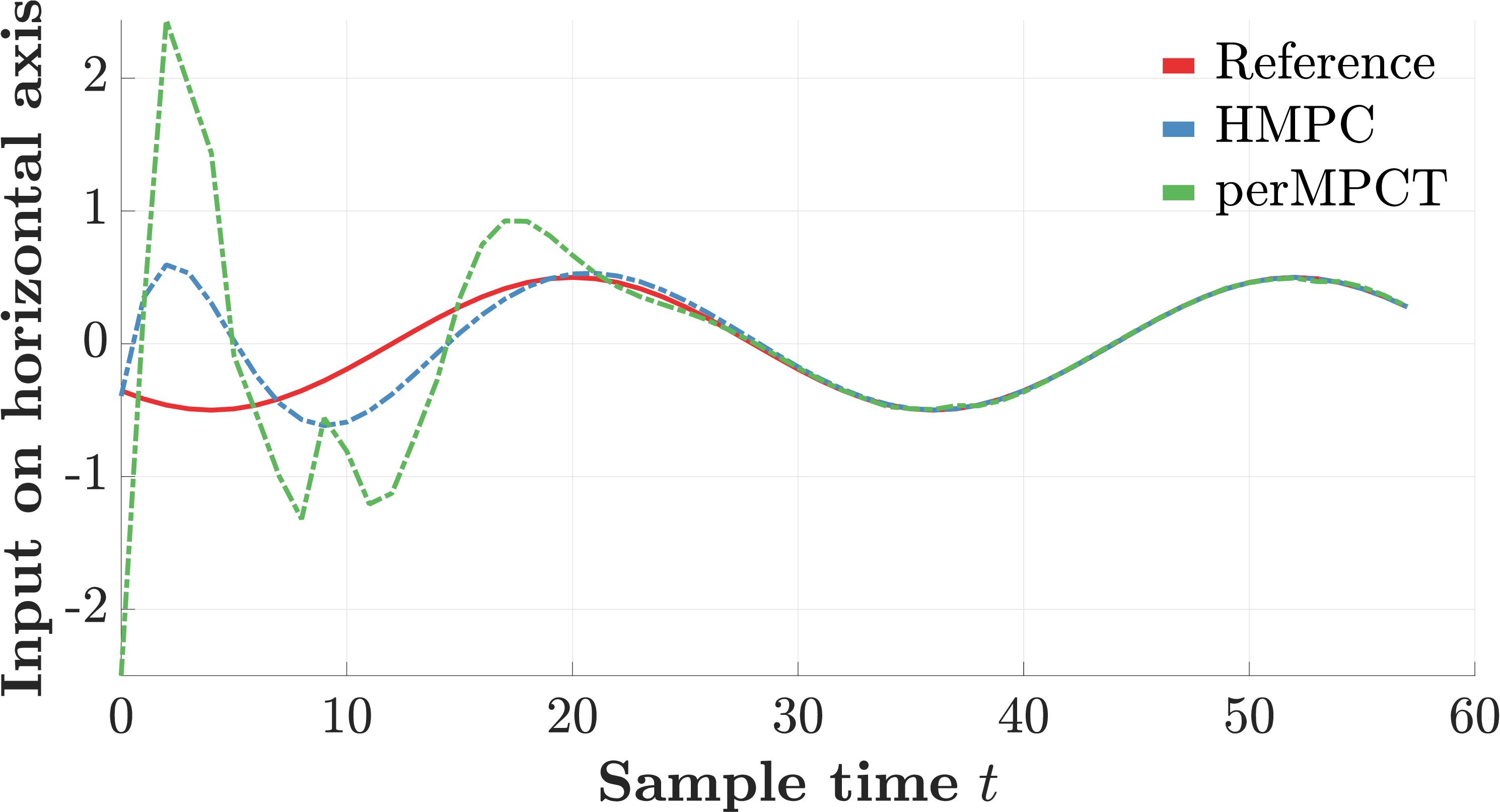}
        \caption{Input trajectories.}
        \label{fig:admissible:input}
    \end{subfigure}%
    \hfill
    \caption{Tracking an admissible harmonic reference.}
    \label{fig:admissible}
\end{figure*}

\begin{figure*}[hbtp]
    \centering
    \begin{subfigure}[ht]{0.26\textwidth}
        \includegraphics[width=\linewidth]{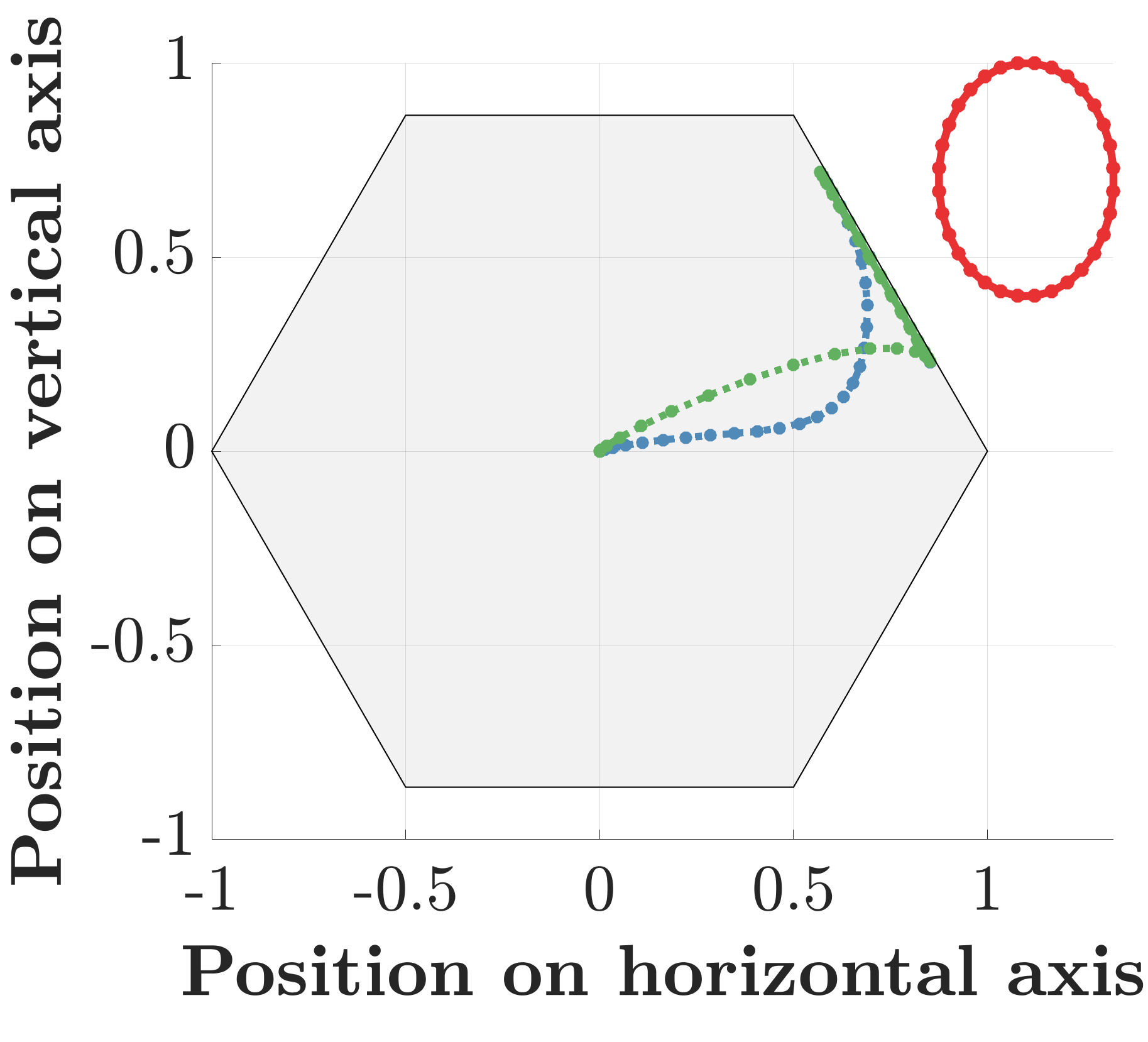}
        \caption{Position using $T_h = 0.1 T_e$.}
        \label{fig:non-admissible:Te}
    \end{subfigure}%
    \hfill
    \begin{subfigure}[ht]{0.26\textwidth}
        \includegraphics[width=1\linewidth]{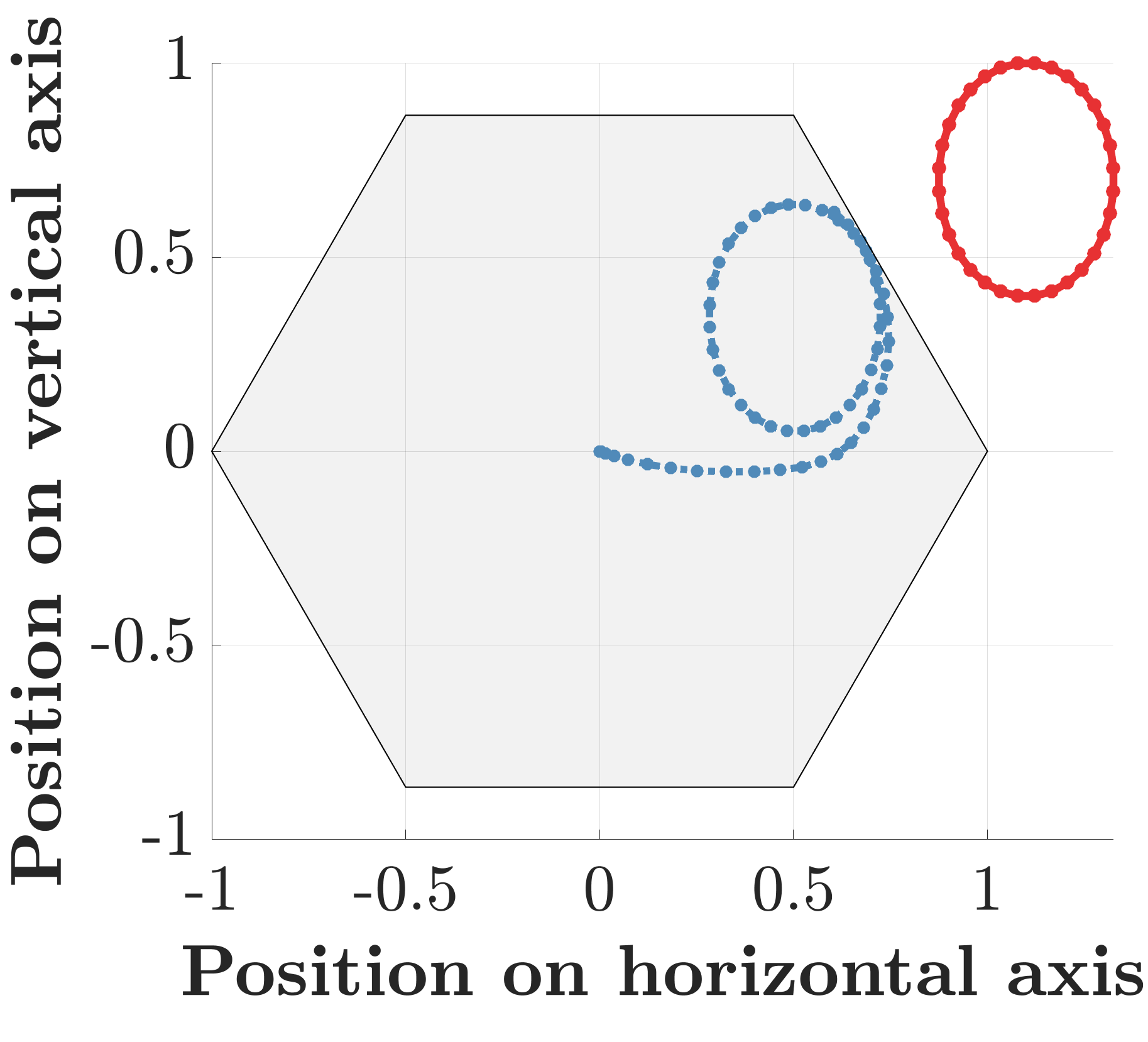}
        \caption{Position using $T_h = 100 T_e$.}
        \label{fig:non-admissible:Th}
    \end{subfigure}%
    \hfill
    \begin{subfigure}[ht]{0.43\textwidth}
        \includegraphics[width=1\linewidth]{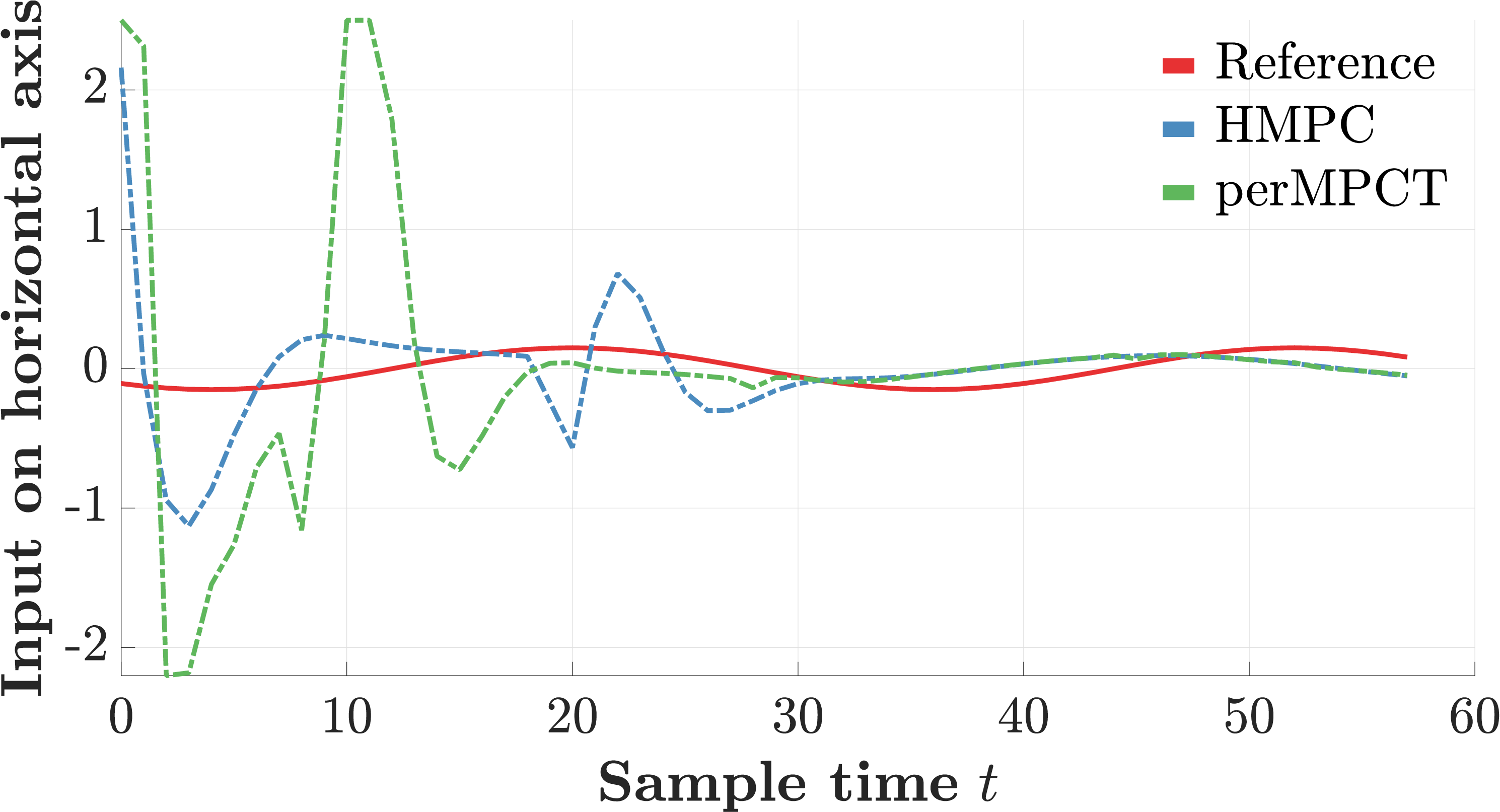}
        \caption{Input trajectories (for $T_h = 0.1 T_e$).}
        \label{fig:non-admissible:input}
    \end{subfigure}%
    \hfill
    \caption{Tracking a non-admissible harmonic reference.}
    \label{fig:non-admissible}
\end{figure*}

\begin{figure*}[hbtp]
    \centering
    \begin{subfigure}[ht]{0.23\textwidth}
        \includegraphics[width=\linewidth]{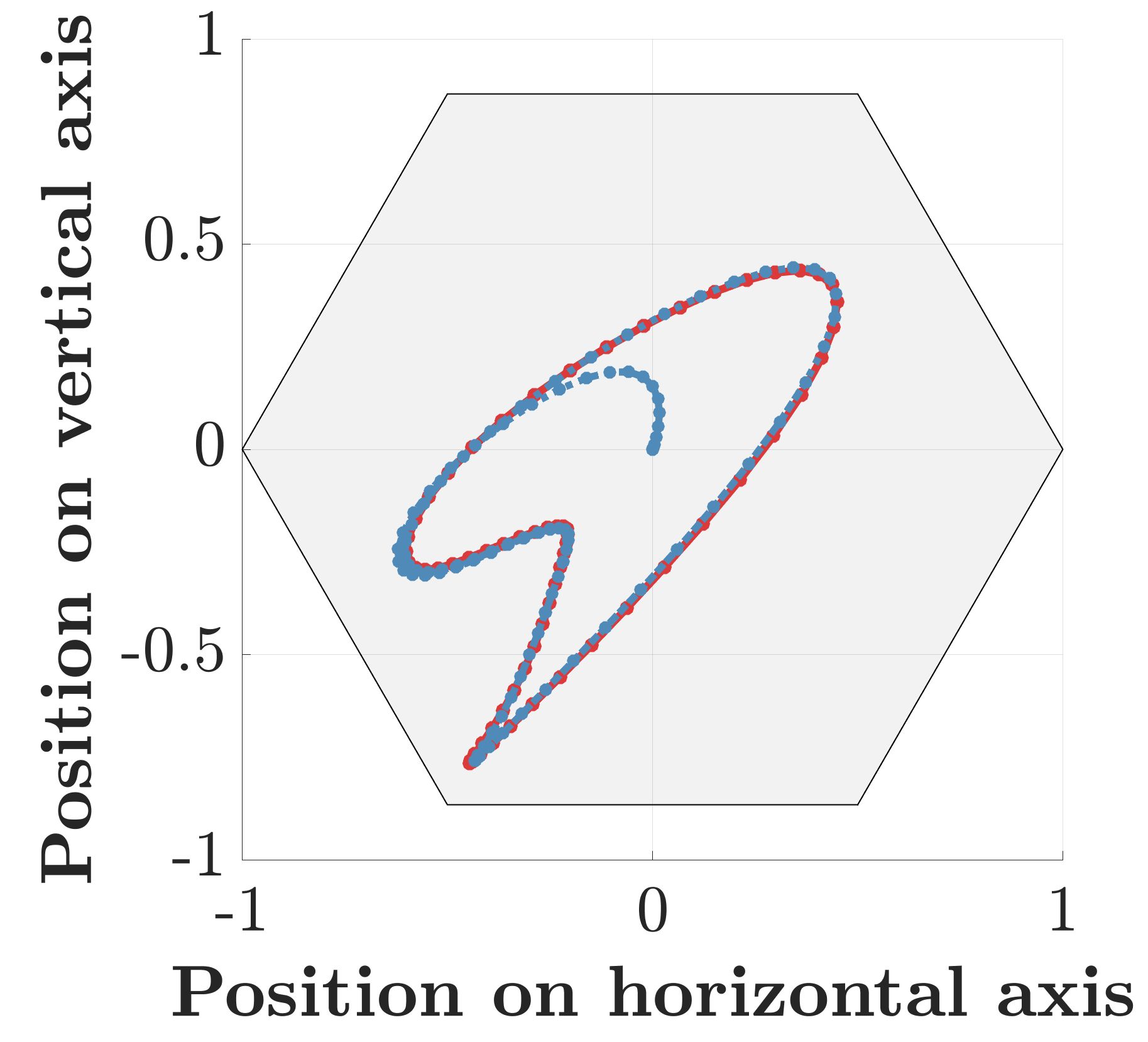}
        \caption{Ball position using HMPC.}
        \label{fig:multiple:admissible:position}
    \end{subfigure}%
    \hfill
    \begin{subfigure}[ht]{0.38\textwidth}
        \includegraphics[width=1\linewidth]{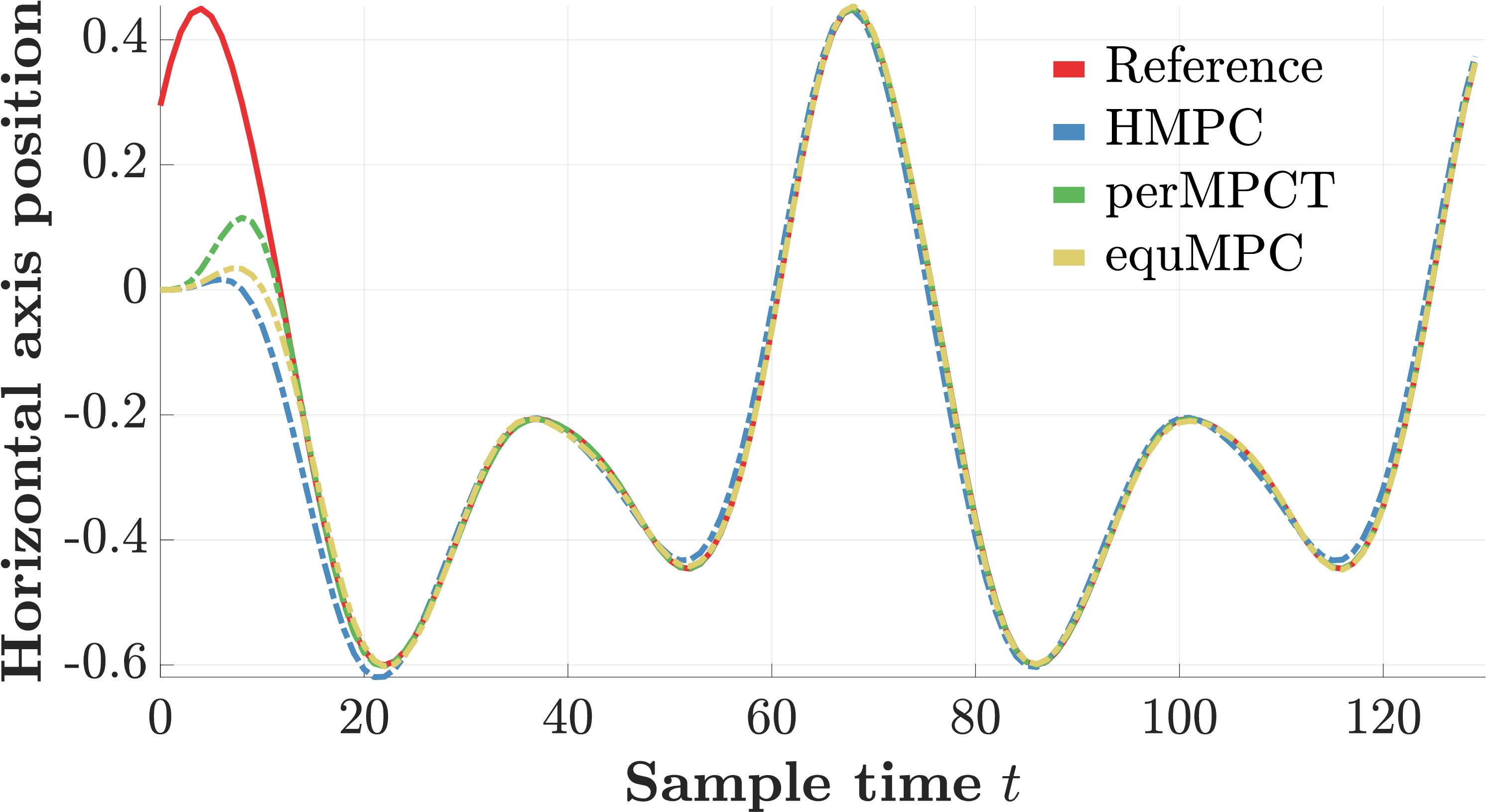}
        \caption{State trajectories.}
        \label{fig:multiple:admissible:state}
    \end{subfigure}%
    \hfill
    \begin{subfigure}[ht]{0.38\textwidth}
        \includegraphics[width=1\linewidth]{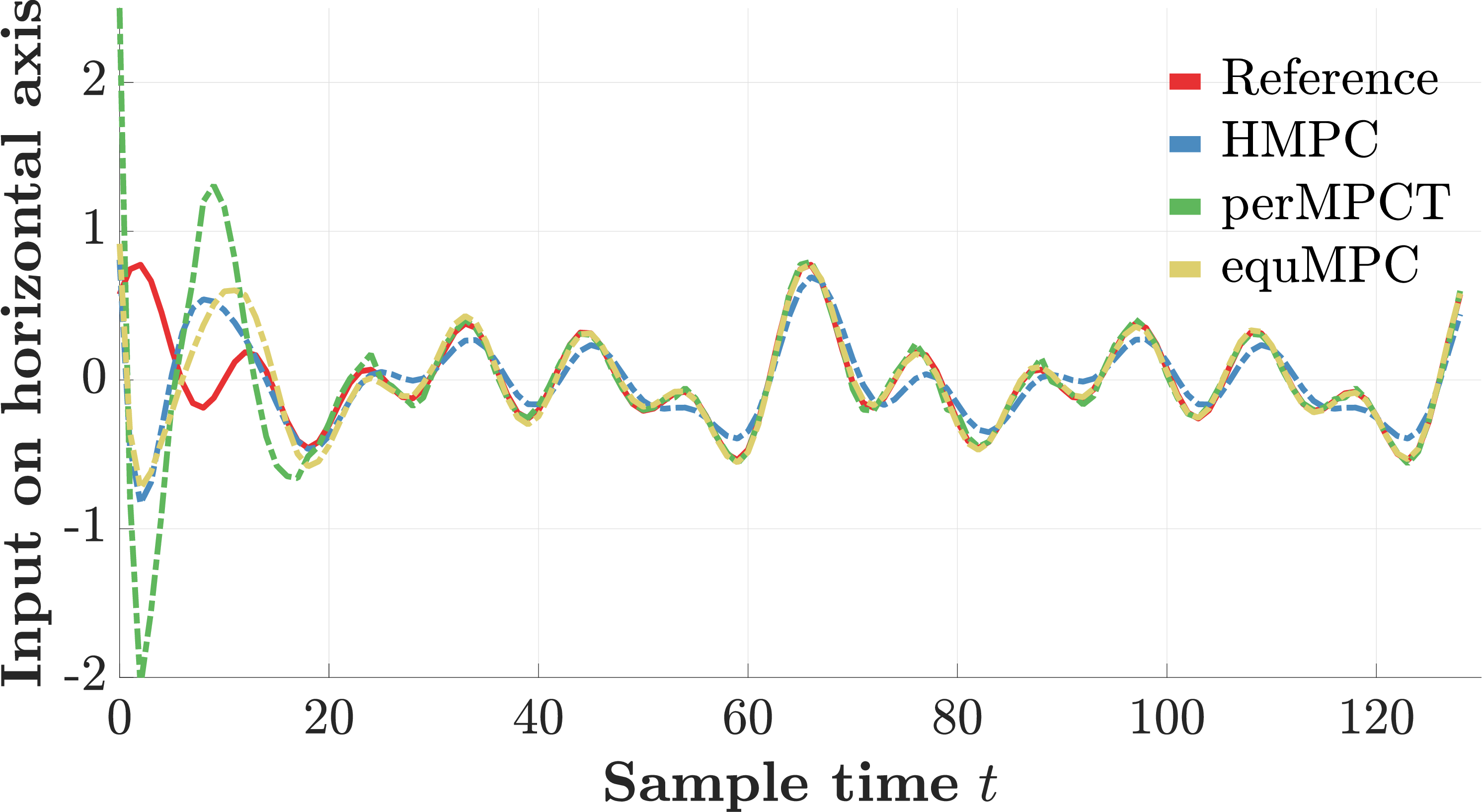}
        \caption{Input trajectories.}
        \label{fig:multiple:admissible:input}
    \end{subfigure}%
    \hfill
    \caption{Tracking an admissible arbitrary reference.}
    \label{fig:multiple:admissible}
\end{figure*}

\begin{figure*}[hbtp]
    \centering
    \begin{subfigure}[ht]{0.23\textwidth}
        \includegraphics[width=\linewidth]{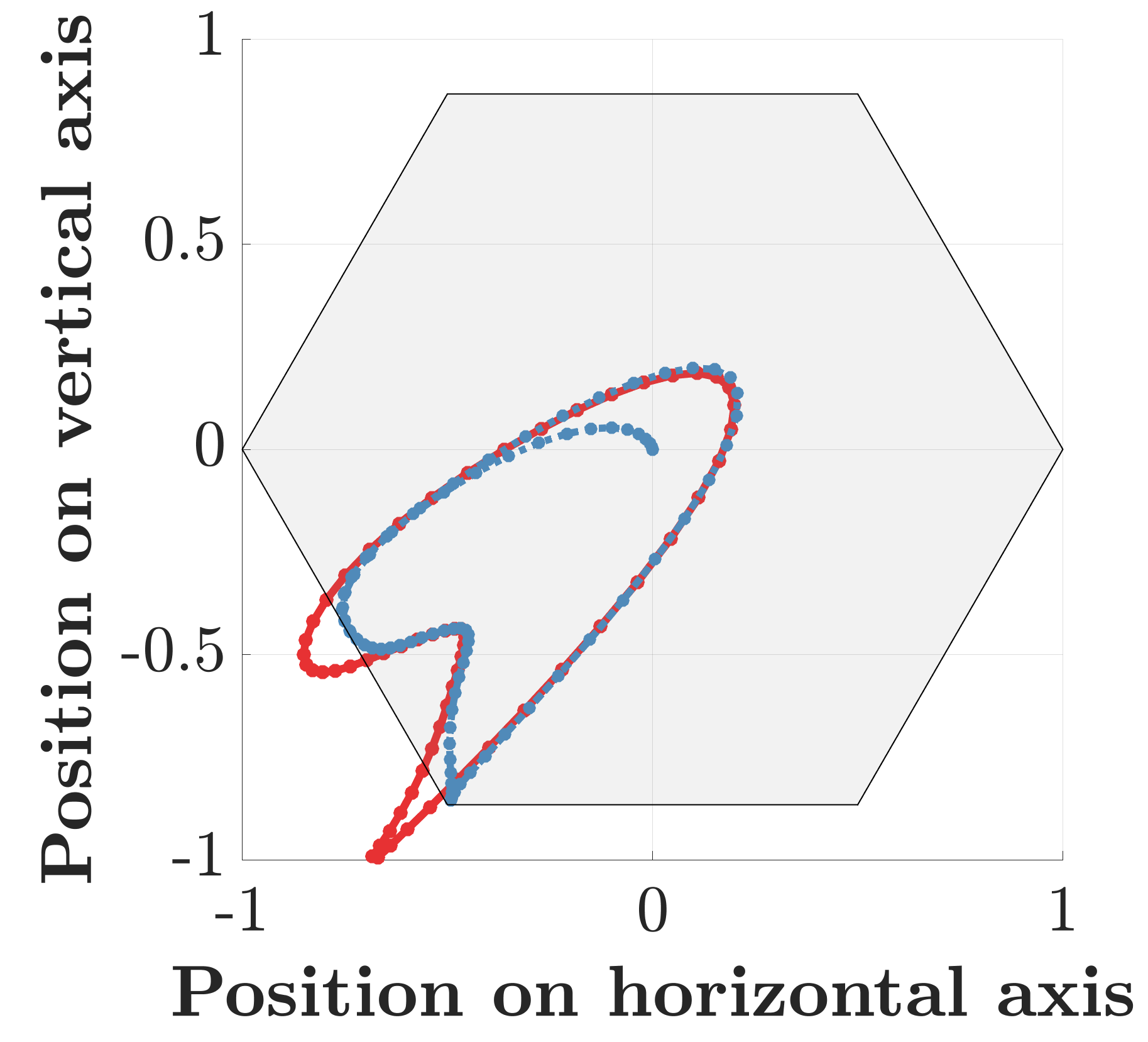}
        \caption{Ball position using HMPC.}
        \label{fig:multiple:non:position}
    \end{subfigure}%
    \hfill
    \begin{subfigure}[ht]{0.38\textwidth}
        \includegraphics[width=1\linewidth]{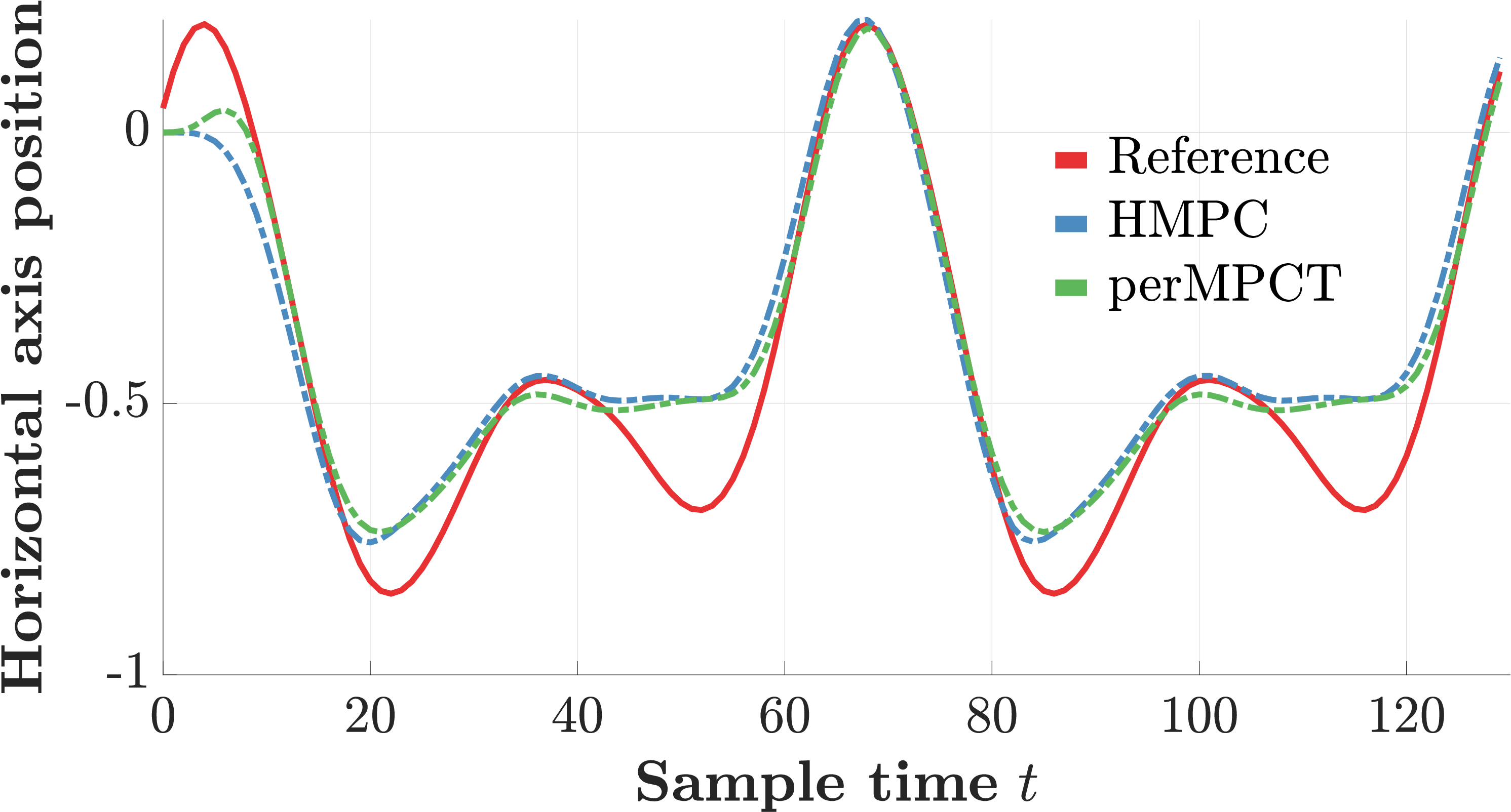}
        \caption{State trajectories.}
        \label{fig:multiple:non:state}
    \end{subfigure}%
    \hfill
    \begin{subfigure}[ht]{0.38\textwidth}
        \includegraphics[width=1\linewidth]{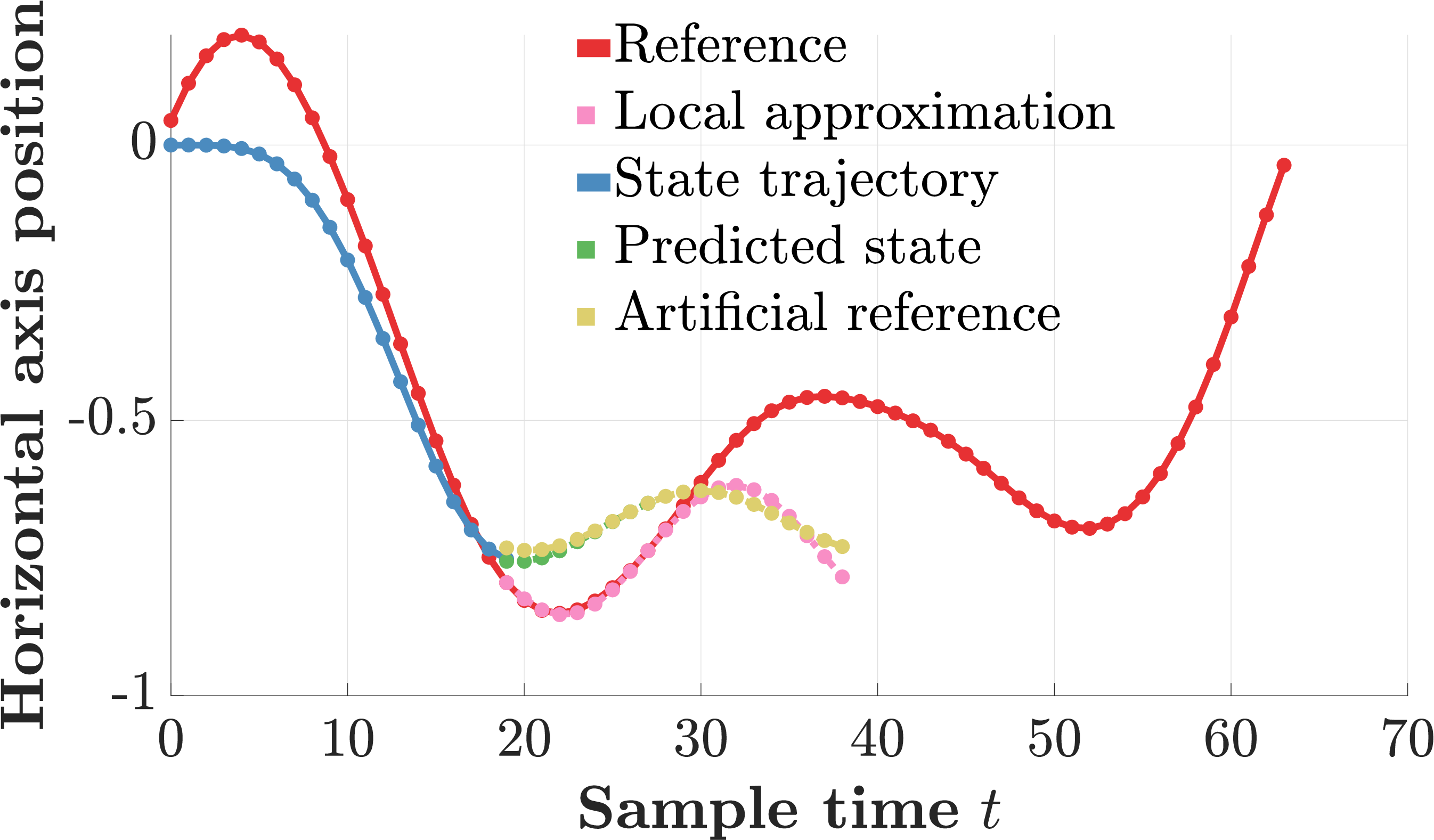}
        \caption{Snapshot of HMPC at $t = 20$.}
        \label{fig:multiple:non:axis1}
    \end{subfigure}%
    \hfill
    \caption{Tracking a non-admissible arbitrary reference.}
    \label{fig:multiple:non}
\end{figure*}

We now show the results when controlling a reference which is not given by a harmonic signal.
Instead, we take a multiple harmonic signal on the form
\begin{align*}
    x_r(t) &= x_{re} + \Sum{i = 1}{p} x_{rs,i} \sin(i w_r t) + x_{rc,i} \cos( i w_r t), \\
    u_r(t) &= u_{re} + \Sum{i = 1}{p} u_{rs,i} \sin(i w_r t) + u_{rc,i} \cos( i w_r t)
\end{align*}
satisfying $\xr(t+1) = A \xr(t) + B \ur(t)$, $\forall t$, where we select the base frequency of the reference as $w_r = \pi/32$ ($T = 64$) and the number of harmonics as $p = 6$.
The main control objective is to control the position of the ball on the plate.
To this end, we focus on penalizing error in position tracking by taking $Q = \diag(10, 0.5, 0.5, 0.5, 10, 0.5, 0.5, 0.5)$.
We also find that $T_h$ needs to be larger when tracking arbitrary references, so we take $T_h = T_e$.
Finally, following the guidelines of \cite[\S VI]{krupa_harmonic_2022}, we take the base frequency as $w = 0.3254$.

Fig.~\ref{fig:multiple:admissible:position} shows the trajectory of the ball on the plate using HMPC.
Fig.~\ref{fig:multiple:admissible:state} and~\ref{fig:multiple:admissible:input} show the position and input trajectories of the horizontal axis, respectively, using the MPC formulations described at the beginning of this section.
Fig.~\ref{fig:multiple:non:position} and~\ref{fig:multiple:non:state} show analogous results to Fig.~\ref{fig:multiple:admissible:position} and~\ref{fig:multiple:admissible:state} but shifting the reference so that it is (partially) non-admissible.
We note that \emph{equMPC} is not included because it looses feasibility when the reference does not satisfy the system constraints.
Finally, Fig.~\ref{fig:multiple:non:axis1} shows a snapshot, at sample time $t = 20$, of the position on the horizontal axis corresponding to the test shown in Fig.~\ref{fig:multiple:non:position}.
The figure includes the local harmonic approximation and the predicted states and artificial harmonic reference returned by the HMPC solver.
Note that the local harmonic reference approximates the desired reference.

The results show that the HMPC formulation tracks the admissible generic reference reasonably well.
When the reference is non-admissible, the local harmonic reference approximation and the artificial reference are no longer close to each other in the areas in which the reference is non-admissible.
In this case, the closed-loop trajectory resembles the desired reference in the areas in which it is non-admissible, although in the case of generic references this resemblance is no longer guaranteed, even if $T_h$ and $S_h$ are dominant.
One of the potential applications of this paradigm is to track generic reference trajectories that are not completely known before hand, since the proposed approach only requires knowledge of the future $N$ elements of the reference.
The advantage of HMPC in this case is its guaranteed recursive feasibility.

\subsection{Computational results and performance} \label{sec:results:computation}

Table~\ref{tab:result} shows the computational results obtained with the different MPC formulations and solvers used in the tests shown in Fig.~\ref{fig:admissible}-\ref{fig:multiple:non}.
Results for Fig.~\ref{fig:admissible}-\ref{fig:non-admissible} are in the rows labeled with ``Harmonic ref.", and the ones for Fig.~\ref{fig:multiple:admissible}-\ref{fig:multiple:non} in the rows labeled with ``Generic ref.".
The computation times for HMPC related to Fig.~\ref{fig:non-admissible} consider $T_h = 0.1 T_e$ (as in Fig.~\ref{fig:non-admissible:Te}).
Table~\ref{tab:result} also shows the performance of the formulations, measured as
\begin{equation} \label{eq:performance}
\Psi_s = \Sum{t=0}{s T-1} \|x(t) - \xr(t) \|_Q^2 + \|u(t) - \ur(t)\|_R^2,
\end{equation}
where we recall that $T$ is the period of the reference.

The performance of HMPC when tracking a harmonic reference is better than with \emph{perMPCT} when using $N = 8$.
The higher performance of the HMPC formulation when working with small prediction horizons was reported in \cite{krupa_harmonic_2022} for the case of tracking constant references.
The result indicates that this may also be the case when tracking harmonic references.

The HMPC and \emph{perMPCT} formulations do not perform as well as \emph{equMPC} when the reference is admissible.
However, the advantage of the HMPC and \emph{perMPCT} formulations is that they have guaranteed recursive feasibility and a larger domain of attraction.
Indeed, we note that the prediction horizon of \emph{equMPC} was set to $N = 16$ because any smaller value resulted in a loss of feasibility in the test shown in Fig.~\ref{fig:multiple:admissible}.
The performance results for the tests with non-admissible reference highlight the good performance of the HMPC formulation when compared to \emph{perMPCT}.
The computational results also highlight the good performance of the tailored HMPC solver available in~\cite{krupa_spcies_2020}.
In particular, the fact that the complexity of the optimization problem does not depend on the period of the reference can provide significant computational benefits.

\section{Conclusions} \label{sec:conclusions}

This article has presented an extension of the HMPC formulation \cite{krupa_harmonic_2022} for tracking harmonic references and has discussed its application for tracking arbitrary reference trajectories.
In the case of tracking harmonic references, we showed that the terminal ingredients can be chosen to penalize deviations with respect to the ``shape'' of the reference, which is an interesting property that may have useful practical applications.
Preliminary simulations indicate that the HMPC formulation can provide good tracking of arbitrary references if its ingredients are chosen appropriately.
An interesting application of this paradigm is when only the future $N$ reference values are known at any given instant.
Computational results indicate that the modified HMPC solver from \cite{krupa_efficiently_2022} provides computational times that are competitive with state-of-the-art solvers.
Moreover, the computation time per iteration of the solver does not depend on the period of the reference, making it an ideal candidate when working with references with large periods.


\begin{appendix}

\begin{proof}[Proof of Lemma~\ref{lemma:xH}]
At $t+1$, $(\xHo(t+1), \uHo(t+1))$ is the optimal solution of 
\begin{subequations} \label{eq:proof:xH:OP:original}
\begin{align}
    \min\limits_{\xH, \uH}\;& V_h(\xH, \uH, \Twa{n_x} \xrH(t), \Twa{n_u} \urH(t)) \\
    {\rm s.t.} \;& (\xH, \uH) \in \cD \cap \cC,
\end{align}
\end{subequations}
where we recall that $\Twa{m} \doteq \diag(I_{m}, \Tw{m})$ and $\Tw{m}$ is defined in \eqref{eq:def:Tw}.
By taking the change of variables $\xH = \Twa{n_x} \hat{\vv{x}}_h$ and ${\uH = \Twa{n_u} \hat{\vv{u}}_h}$, we can recover the optimal solution of \eqref{eq:proof:xH:OP:original} from the optimal solution of
\begin{equation}
\label{eq:proof:xH:OP:transformed}
\min\limits_{\hat{\vv{x}}_h, \hat{\vv{u}}_h}\; \left\{ \tilde{V}_h,\; 
{\rm s.t.} \, (\Twa{n_x} \hat{\vv{x}}_h, \Twa{n_u} \hat{\vv{u}}_h) \in \cD \cap \cC \right\},
\end{equation}
where $\tilde{V}_h \doteq V_h(\Twa{n_x} \hat{\vv{x}}_h, \Twa{n_u} \hat{\vv{u}}_h, \Twa{n_x} \xrH(t), \Twa{n_u} \urH(t))$.

From Propositions~\ref{prop:harmonic:dynamics} and \ref{prop:harmonic:constraints}, we have that the set $\cD \cap \cC$ is closed under the transform $(\Twa{n_x}, \Twa{n_u})$, i.e., 
\begin{equation*}
(\Twa{n_x} \hat{x}_h, \Twa{n_u} \hat{u}_h) \in \cD \cap \cC \iff (\hat{x}_h, \hat{u}_h) \in \cD \cap \cC.
\end{equation*}

Additionally, the cost function satisfies the equality ${\tilde{V}_h = V_h(\hat{\vv{x}}_h, \hat{\vv{u}}_h, \xrH(t), \urH(t))}$.
Indeed, first note that the $\xe$ and $\ue$ terms of $V_h$ are equal due to the identity matrices in $\Twa{n_x}$ and $\Twa{n_u}$.
Next, for the $\xs$, $\xc$, $\us$, and $\uc$ terms, we have that for any $z, v \in \R^{2 m}$ and $D = \diag(\tilde{D}, \tilde{D}) $, with  $\tilde{D} \in \Dp{m}$,
\begin{align*}
    \| \Tw{m} (z - v) \|^2_D &= (z - v)\T (\Tw{m})\T D \Tw{m} (z - v) \\
                                     &\becauseof{(*)} (z - v)\T D (z - v) = \| z - v \|^2_D,
\end{align*}
where $(*)$ follows from the definition of $\Tw{m}$ \eqref{eq:def:Tw} and the well known identity $\sin^2(w) + \cos^2(w) = 1$.
The equality then follows from the fact that $T_h \in \Dp{n_x}$ and $S_h \in \Dp{n_u}$.
Therefore, \eqref{eq:proof:xH:OP:transformed} is equivalent to \eqref{eq:xHo:OP}, whose solution is $(\xHo(t), \uHo(t))$ by definition, thus $\hat{\vv{x}}_h^* = \xHo(t)$ and $\hat{\vv{u}}_h^* = \uHo(t)$. \qedhere
\end{proof}

\begin{proof}[Proof of Theorem~\ref{theo:feasibility}]
The proof is nearly identical to the recursive feasibility proof of the original HMPC formulation \cite[Theorem~1]{krupa_harmonic_2022} since the constraints of \eqref{eq:HMPC} are identical to the ones of the HMPC formulation presented in \cite{krupa_harmonic_2022} with the exception of constraint \eqref{eq:HMPC:xN}, which instead reads as $x^N = \xe + \xc$ in~\cite{krupa_harmonic_2022}.
This difference, however, is simply a time-shift which we take to simplify the notation when working with harmonic references.
Thus, since the recursive feasibility does not depend on the reference, the proof of the theorem follows identically to the proof of \cite[Theorem~1]{krupa_harmonic_2022} but taking \cite[Eq.~(19b)]{krupa_harmonic_2022} as $\bar{u}^+_{N-1} = \ue + \us \sin(w N) + \uc \cos(w N)$.
\end{proof}

\begin{proof}[Proof of Theorem~\ref{theo:stability}]
The proof is very similar to the asymptotic stability proof of the original HMPC formulation \cite[Theorem~3]{krupa_harmonic_2022}.
As in \cite{krupa_harmonic_2022}, the proof is based on finding a Lyapunov function that satisfies the asymptotic stability conditions from \cite[Theorem~2]{krupa_harmonic_2022}.
The difference is that the Lyapunov function is taken for $x(t) - \xho(t)$, whereas in \cite{krupa_harmonic_2022} the Lyapunov function is taken for $x(t) - \xeo$, since the artificial reference in \cite{krupa_harmonic_2022} is a steady-state.
This requires two modifications to the proof.
The first is that \cite[Lemma~2]{krupa_harmonic_2022} has to be rewritten for the case of a harmonic reference instead of a steady-state reference.
The lemma can be rewritten, with small modifications, to prove that $x(t) = \xe^* + \xc^*$ if and only if $x(t) = \xho(t)$.
The second are the modifications due to the time-varying nature of the reference $(\xr(\cdot), \ur(\cdot))$ and to the difference between the offset cost function $V_h$ of \eqref{eq:HMPC} with the one used in \cite{krupa_harmonic_2022}.
These differences require several modifications which are solved using simple algebraic manipulations, resulting in the same lines of reasoning and arguments used in \cite[Theorem~3]{krupa_harmonic_2022}.
\end{proof}

\end{appendix}

\vspace*{-1em}

\bibliographystyle{IEEEtran}
\bibliography{IEEEabrv,ellipHMPC}

\end{document}